\newenvironment{lemma-l}[1]{\noindent {\bf Lemma~#1.~}\em }{\smallskip}
\newenvironment{theorem-t}[1]{\noindent {\bf Theorem~#1.~}\em }{\smallskip}
\newtheorem{obs}[theorem]{Observation}
\newcommand{\lk}[1]{{\color{blue}[LK: #1]}}
\newcommand{\nari}[1]{{\bf [NARI:} #1]}
\DeclareMathAlphabet{\mathcal}{OMS}{cmsy}{m}{n}
\newcommand{\Ff}{\ensuremath{\mathcal F}}
\newcommand{\Cc}{\ensuremath{\mathcal C}}
\newcommand{\Rr}{\ensuremath{\mathcal C}}
\theoremstyle{empty}
\newtheorem{duplicate}{Theorem}
\begin{document}

\title{Hitting Set for hypergraphs of low VC-dimension}

\author[1]{Karl Bringmann}
\author[2]{L\'{a}szl\'{o} Kozma}
\author[3]{Shay Moran}
\author[4]{N.\ S.\ Narayanaswamy}

\affil[1]{Institut f\"ur Theoretische Informatik, ETH Z\"{u}rich\\
  \texttt{\footnotesize karl.bringmann@inf.ethz.ch}}
\affil[2]{Department of Computer Science, Saarland University\\
  \texttt{\footnotesize kozma@cs.uni-saarland.de}}
\affil[3]{Department of Computer Science, Technion-IIT, Israel, Microsoft Research, Hertzelia, and Max Planck Institute for Informatics, Saarbr\"{u}cken, Germany\\
  \texttt{\footnotesize shaymoran1@gmail.com}}  
\affil[4]{Department of Computer Science and Engineering, Indian Institute of Technology Madras,~~   \texttt{\footnotesize swamy@cse.iitm.ac.in}}  
\authorrunning{K.\ Bringmann, L.\ Kozma, S.\ Moran, N.\ S.\ Narayanaswamy} 

\Copyright{Karl Bringmann, L\'{a}szl\'{o} Kozma, Shay Moran, N.\ S.\ Narayanaswamy}


\serieslogo{}

\maketitle
\begin{abstract}
We study the complexity of the Hitting Set problem in set systems (hypergraphs) that avoid certain sub-structures. In particular, we characterize the classical and parameterized complexity of the problem when the Vapnik-Chervonenkis dimension (VC-dimension) of the input is small. 

VC-dimension is a natural measure of complexity of set systems. Several tractable instances of Hitting Set with a geometric or graph-theoretical flavor are known to have low VC-dimension. In set systems of bounded VC-dimension, Hitting Set is known to admit efficient and almost optimal approximation algorithms (Br\"{o}nnimann and Goodrich, 1995; Even, Rawitz, and Shahar, 2005; Agarwal and Pan, 2014). 

In contrast to these approximation-results, a low VC-dimension does not necessarily imply tractability in the parameterized sense. In fact, we show that Hitting Set is $W[1]$-hard already on inputs with VC-dimension $2$, even if the VC-dimension of the dual set system is also $2$. Thus, Hitting Set is very unlikely to be fixed-parameter tractable even in this arguably simple case. This answers an open question raised by King in 2010. For set systems whose (primal or dual) VC-dimension is $1$, we show that Hitting Set is solvable in polynomial time.

To bridge the gap in complexity between the classes of inputs with VC-dimension $1$ and $2$, we use a measure that is more fine-grained than VC-dimension. In terms of this measure, we identify a sharp threshold where the complexity of Hitting Set transitions from polynomial-time-solvable to NP-hard. The tractable class that lies just under the threshold is a generalization of Edge Cover, and thus extends the domain of polynomial-time tractability of Hitting Set.

\end{abstract}


\section{Introduction}
Let $\Cc$ be a collection of subsets of a finite set $X$. We call the pair $(X, \Cc)$ a \emph{set system}.\footnote{Alternative names in the literature are \emph{hypergraph} and \emph{range space}.} A \emph{hitting set} of $(X,\Cc)$ is a subset of $X$ that has non-empty intersection with all members of $\Cc$. The decision version of the Hitting Set problem asks, given a positive integer $k$, whether a set system has a hitting set of size at most $k$.
 
Hitting Set and its dual, Set Cover, are both ubiquitous and notoriously difficult problems. 
For an arbitrary set system $(X, \Cc)$, Hitting Set is NP-hard to approximate~\cite{lund, ams} with a multiplicative factor better than $c \cdot \log(|\Cc|\cdot |X|)$, for some constant $c>0$. 

Given a set system $\Ff = (X, \Cc)$, and a set $A \subseteq X$, we define the \emph{projection\footnote{Also known in the literature as the \emph{trace} of a set system.} of $\Ff$ on $A$} as $PR_{\Ff}(A) = \{R \cap A \mid R \in \Cc \}$. A set $A$ is said to be \emph{shattered} by $\Ff$ if $PR_{\Ff}(A) = 2^A$, i.e.\ the set of all subsets of $A$. The \emph{Vapnik-Chervonenkis dimension} (or \emph{VC-dimension}) of a set system $\Ff$, denoted $VC(\Ff)$, is the cardinality of the largest set shattered by $\Ff$.
VC-dimension was originally introduced in learning theory~\cite{vc71,blumer}, where it captures the sample complexity in the PAC model.
Since its introduction, VC-dimension has seen many further applications both inside and outside learning theory (see e.g.~\cite{chazelle, matousek}) and it has become a standard measure of complexity of set systems.

Allowing a set system to have large VC-dimension means that less restrictions are placed on its structure, making it more difficult as a Hitting Set instance. In this paper we study both the classical and parameterized complexity of Hitting Set when the VC-dimension of the input set system is bounded.\\

\vspace{-0.1in}
\noindent{\bfseries Hitting Set and parameterized complexity.}
In parameterized complexity, a problem is called {\em fixed-parameter tractable} (FPT) with respect to a parameter\footnote{In this paper we always use the standard parameter, i.e.\ the solution size $k$.} $k$, if there exists an algorithm that solves it in time $O(f(k) \cdot n^{O(1)})$ for an arbitrary function $f$ (where $n$ is the input size). Fixed-parameter tractability has emerged as a powerful tool to deal with hard combinatorial problems. We refer the reader to \cite{DF99,Nie06, FG06} for more details. Unfortunately, Hitting Set is $W[2]$-hard~\cite{FG06}, and thus unlikely to be FPT, meaning that it is hopelessly difficult even from a parameterized perspective. 

However, instances arising in various applications (e.g.\ in graph-theoretical or in geometric settings) often have special structure that can be algorithmically exploited. Indeed, the literature abounds with studies of problems - many of them FPT - that can be seen as special cases of Hitting Set.

Graph-theoretical examples of Hitting Set problems include Vertex Cover, Edge Cover, Feedback Vertex Set, and Dominating Set. In each of these problems the input set system is implicitly defined by an underlying graph $G$, with sets corresponding to the edges, vertices, cycles, and neighborhoods of $G$, respectively. The first three of these problems are well-known to be FPT (Edge Cover is even in P). Dominating Set remains $W[2]$-hard, but is FPT in certain families of graphs (see Table~\ref{tab:examples}). 
Intuitively, Dominating Set is hard because it places very few restrictions on the input: Every set system whose incidence matrix is symmetric can be a Dominating Set instance. Special cases where Dominating Set is FPT include \emph{biclique-free} graphs~\cite{biclique1, biclique2} (a family that contains \emph{bounded genus}, \emph{planar}, \emph{bounded treewidth}, and many other natural classes), \emph{claw-free} graphs~\cite{mnich}, and graphs \emph{with girth at least five}~\cite{RamSau08}. The structure that makes these special cases of Dominating Set tractable can be described in terms of \emph{forbidden patterns} in the adjacency matrix of $G$. For instance, biclique-freeness simply translates to the avoidance of an all-$1$s submatrix of a certain size.
Our work continues this line of investigation: A VC-dimension smaller than $d$ can be interpreted as the avoidance of every matrix with $d$ columns that contain all $2^d$ different boolean vectors in its rows. 

In geometric examples of Hitting Set, the input set system is defined by the incidences between (typically) low complexity geometric shapes, such as points, intervals, lines, disks, rectangles, hyperplanes, etc.
VC-dimension is a natural and useful complexity measure for geometrically defined set systems~\cite{vc71, blumer, learning, epsnet}.
In Table~\ref{tab:examples} we list some representative examples of Hitting Set problems from the literature. \\

\vspace{-0.1in}
\noindent{\bfseries Hitting Set and VC-dimension.} 
Given the difficulty of Hitting Set and the wealth of special cases that are FPT or polynomial-time solvable, it is natural to ask for a general structural property of set systems that guarantees tractability. Such a question has been successfully answered in the field of approximation algorithms: After a series of approximation-results for concrete geometric problems, the landmark result of Br\"{o}nnimann and Goodrich~\cite{bronn} gave an almost optimal\footnote{As a further witness to the difficulty of Hitting Set, \emph{almost optimal} here means a \emph{logarithmic factor of the optimum}, i.e.\ $O(\log{k})$. For more restricted geometric problems better approximation ratios are known, see e.g.\ \cite{clarkson,mustafa_ray}.} approximation algorithm for Hitting Set on set systems with bounded VC-dimension. The algorithm has been further improved by Even at al.~\cite{Even} and recently by Agarwal and Pan~\cite{AgarwalPan}. In this paper we consider this question from a parameterized viewpoint.

In general, the relevance of VC-dimension to Hitting Set has long been known: Low VC-dimension implies the existence of an $\epsilon$-net of small size~\cite{epsnet}. An $\epsilon$-net can be seen as a relaxed form of hitting set in which we are only interested in hitting all sets whose size is at least an $\epsilon$-fraction of the universe size. For set systems with low VC-dimension the size of the fractional hitting set is close to the size of the integral hitting set - this observation is the basis of the approximation-result of Br\"{o}nnimann and Goodrich~\cite{bronn}.\\

\vspace{-0.1in}
\noindent{\bfseries Dual VC-dimension.} The \emph{incidence matrix} of a set system $\Ff = (X,\Cc)$ is a $0/1$ matrix with columns indexed by elements of $X$, and rows indexed by members of $\Cc$. An entry $(A,x)$ of the incidence matrix (where $A \in \Cc$ and $x \in X$) is $1$ if $x \in A$, and $0$ otherwise.  

Given a set system $\Ff$, it is natural to consider its \emph{dual} set system denoted $\Ff^T$, obtained by interchanging the roles of elements and sets (i.e.\ transposing the incidence matrix of the set system\footnote{The transposed incidence matrix may contain duplicate rows, contradicting the definition of a set system. It is safe to discard such duplicates, as this does not affect the VC-dimension or the Hitting Set solution.}). The Hitting Set problem on the dual set system is known as \emph{Set Cover}. The VC-dimension of the dual set system, denoted $VC(\Ff^T)$ is a further natural parameter of set systems. It is well-known that if $VC(\Ff) = d$, then the inequality $VC(\Ff^T) < 2^{d+1}$ holds. \\

{\renewcommand{\arraystretch}{1.3}%
\begin{table} 
\scriptsize\centering 
\begin{tabular}{|>{\raggedright}p{0.33\paperwidth}|>{\centering}p{0.08\paperwidth}|>{\centering}p{0.1\paperwidth}|>{\centering}p{0.09\paperwidth}|}
\hline 
{\bf Graph problem} & FPT status & VC-dimension
\tabularnewline
\hline 
Edge Cover & P & $2$ 
\tabularnewline
Tree-Like Hitting Set~\cite{Guo} & P & $\infty$
\tabularnewline
Vertex Cover & FPT & $ 2$ 
\tabularnewline
Dominating Set (claw-free) \cite{mnich} & FPT & $ \infty$ 
\tabularnewline
Dominating Set (girth $\geq 5$) \cite{RamSau08}& FPT & $ 2$ 
\tabularnewline
Dominating Set (planar) \cite{domplanar} & FPT & $ 4$ 
\tabularnewline
Dominating Set ($K_{t,t}$-free) \cite{biclique1, biclique2} & FPT & $ t + \lceil \log_2{t} \rceil$ -1 
\tabularnewline
Feedback Vertex Set \cite{feedback1,feedback2} & FPT & $ \infty$ 
\tabularnewline

Dominating Set (unit disk) \cite{unitdisk} & $W[1]$-hard & $ 3$  
\tabularnewline
Dominating Set (induced $K_{4,1}$-free) \cite{mnich} & $W[1]$-hard & $\infty$  
\tabularnewline
Dominating Set ($\Delta$-free) \cite{RamSau08}& $W[2]$-hard & $\infty$  
\tabularnewline
\hline

{\bf Geometric problem} & FPT status & VC-dimension 
\tabularnewline
\hline 
Line intervals & P & $ 2$ 
\tabularnewline

Halfplane arrangement in $\mathbb{R}^2$ ~\cite{sariel}& P & $3 $ 
\tabularnewline

Disjoint Rectangle Stabbing \cite{HKLRS} & FPT & $2$ 
\tabularnewline

Pseudoline arrangement & FPT & $2$ 
\tabularnewline

Hyperplane arrangement in $\mathbb{R}^d$ & FPT & $d+1$ 
\tabularnewline

Halfspace arrangement in $\mathbb{R}^3$ [\textsection\,\ref{hardnessproof}]& $W[1]$-hard & $4$ 
\tabularnewline

Collection of unit disks in $\mathbb{R}^2$ \cite{panos} & $W[1]$-hard & $3$ 
\tabularnewline

Collection of unit squares in $\mathbb{R}^2$ \cite{panos} & $W[1]$-hard & $3$ 
\tabularnewline

Rectangle Stabbing \cite{DFR09} & $W[1]$-hard & $3$ 
\tabularnewline
\hline

\end{tabular}\protect\caption{Special cases of Hitting Set in the FPT literature, and their VC-dimension. For hardness results, the values for VC-dimension should be prefixed with ``at least'', for algorithmic results (P and FPT) with ``at most''. The results in the table are discussed in \textsection\,\ref{sec:tabledisc1} and \textsection\ \ref{sec:tabledisc2} of the Appendix.}

\label{tab:examples}
\end{table}

}

\vspace{-0.1in}
\noindent{\bfseries Our results.} 
We study the classical and parameterized complexity of Hitting Set restricted to set systems with small VC-dimension. 
In light of Table~\ref{tab:examples}, there is no clear separation 
at any value of the VC-dimension: Some FPT classes have unbounded VC dimension, while $W[1]$-hard classes with VC-dimension $3$ are known\footnote{To the best of our knowledge, prior to our paper there were no $W[1]$-hard examples known with VC-dimension \emph{or} dual VC-dimension lower than $3$. In fact, we are not aware of $W[1]$-hard examples with \emph{explicitly stated} VC-dimension lower than $4$, see \textsection\,\ref{sec:tabledisc2}. }. However, an FPT result for Hitting Set restricted to VC-dimension $2$ would generalize many known FPT results for special cases of Hitting Set. Hence, we study the existence of a small threshold value of VC-dimension, below which Hitting Set is tractable and at which it becomes intractable (both in the parameterized and in the classical sense). The program of finding such a dichotomy for the FPT complexity of Hitting Set in terms of the VC-dimension has also been proposed by King~\cite{stack}.

In this paper, we show the threshold of tractability to be at the (surprisingly low) value of $2$, i.e., we prove $W[1]$-hardness of Hitting Set restricted to VC-dimension $2$ (even if also the dual VC-dimension is $2$). The phenomenon of a large gap between the complexity of set systems of VC-dimension $1$ and set systems of VC-dimension $2$ also occurs in other areas such as communication complexity, machine learning, and geometry~\cite{DBLP:journals/eccc/AlonMY14,MoranSWY15}.
Moreover, assuming the Exponential Time Hypothesis (ETH) we obtain an almost matching lower bound for the trivial $n^{O(k)}$ algorithm. We prove this result in \textsection\,\ref{sec:hardness}.
\begin{theorem}\label{thm1}
Hitting Set and Set Cover restricted to set systems $\Ff=(X,\Cc)$ with $VC(\Ff) = VC(\Ff^T) = 2$ are $W[1]$-hard. Moreover, if any of these problems can be solved in time $f(k) \cdot |X|^{o(k / \log k)}$, where $f$ is an arbitrary function and $k$ is the solution size, then ETH fails. 
\end{theorem}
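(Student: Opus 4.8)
\noindent This is a hardness result, so the plan is to give a polynomial-time parameterized reduction \emph{to} Hitting Set on set systems with $VC(\Ff)=VC(\Ff^T)=2$, \emph{from} a problem that is simultaneously $W[1]$-hard and carries an ETH lower bound of the form $f(k)\cdot N^{o(k/\log k)}$. A natural choice is \emph{Partitioned Subgraph Isomorphism} (equivalently, binary constraint satisfaction with a bounded number of constraints), for which Marx's lower bound rules out time $f(\ell)\cdot N^{o(\ell/\log \ell)}$ under ETH, with $\ell$ the number of pattern edges and $N$ the instance size; a parameter-linear reduction transports this bound. One could instead reduce from Multicolored Clique, at the price of a logarithmic blow-up of the parameter, which converts its $N^{o(k)}$ bound into the $\lvert X\rvert^{o(k/\log k)}$ bound claimed here. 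Finally, since Hitting Set on $\Ff$ is exactly Set Cover on $\Ff^T$ and our construction will satisfy $VC(\Ff)=VC(\Ff^T)=2$, one reduction settles both statements: for Set Cover, dualize.

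\noindent\textbf{The reduction.} Given a pattern graph with vertices $u_1,\dots,u_p$ and edges $E$, and a host graph with vertex set partitioned into classes $V_{u_1},\dots,V_{u_p}$, I would build $\Ff=(X,\Cc)$ from two kinds of gadgets. For each pattern vertex $u_a$ there is a \emph{selection gadget}: a private block of elements encoding the candidates in $V_{u_a}$, together with one constraint forcing a candidate to be chosen. The blocks are pairwise disjoint and the budget is $k'=p=\Theta(\ell)$, so any solution of size $k'$ selects exactly one (well-formed) candidate per block. For each pattern edge $u_au_b$ there is a \emph{verification gadget}: a family of constraint sets, living inside blocks $a$ and $b$, that are \emph{all} hit by the two selected candidates precisely when these are adjacent in the host graph; otherwise some constraint set remains unhit, and the tight budget forbids repairing it.

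\noindent\textbf{Bounding both VC-dimensions --- the crux.} One must verify (i) that $\Cc$ shatters no three elements, and (ii) that no three members of $\Cc$ realize all eight regions of their Venn diagram (dual shattering). Arranging the construction so that this holds is the step I expect to be the main obstacle: the ``obvious'' verification gadget (``if candidate $x$ is chosen for $u_a$, then the candidate for $u_b$ must lie in the host-neighbourhood of $x$'') already fails~(ii) when the host graph is rich enough, since three such sets can realize all eight patterns. The fix is to give the verification sets a restricted combinatorial shape --- roughly, each one meets each of its two blocks in a set drawn from a family of VC-dimension at most $1$ (a chain of nested sets, or prefixes/suffixes of a fixed linear order on the block) --- which typically forces the host-neighbourhood information to be routed through an auxiliary, ordered sub-gadget. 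Granting such a format, both parts follow from a case analysis over how a putative triple spreads across gadgets: disjointness of distinct blocks kills any region that would require an element in two of them (this handles, e.g., a triangle of pattern edges), and inside a single block a triple cannot be shattered because one of the three sets is the whole block while the traces of the others cannot separate any two elements in both directions.

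\noindent\textbf{Correctness and the time lower bound.} Soundness and completeness are then routine: a pattern copy yields the canonical size-$k'$ hitting set, and conversely a hitting set of size $k'$ must, by the budget argument, pick one well-formed candidate per block, whereupon the verification gadgets force them to be pairwise adjacent, i.e.\ to form a pattern copy. As $\lvert X\rvert=N^{O(1)}$ and $k'=\Theta(\ell)$, an algorithm running in $f(k')\cdot\lvert X\rvert^{o(k'/\log k')}$ would solve Partitioned Subgraph Isomorphism in time $f'(\ell)\cdot N^{o(\ell/\log \ell)}$, contradicting ETH; and $W[1]$-hardness is immediate since the reduction is parameterized. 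The statement for Set Cover, with $VC=2$ on both sides, follows by passing to the dual set system.
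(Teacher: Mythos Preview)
Your high-level plan matches the paper's: reduce from Partitioned Subgraph Isomorphism, keep the parameter linear, and bound both VC-dimensions by forcing all sets to look like unions of at most two \emph{intervals} (prefixes/suffixes) over linearly ordered ground blocks. That part is right, and the dualization argument for Set Cover is fine.

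The genuine gap is exactly the step you flag as ``the main obstacle'' and then wave away. You propose budget $k'=p$ (one token per pattern vertex) and verification sets whose trace on each block is a prefix or suffix; but an interval family cannot encode an arbitrary host-neighbourhood, so such sets alone cannot express ``the candidate in block $a$ is adjacent to the candidate in block $b$''. You acknowledge that an ``auxiliary, ordered sub-gadget'' is needed, but that immediately breaks $k'=p$, and you give no construction. The paper's resolution is nontrivial and is where all the content lies: in addition to vertex blocks $X_i^\ell$ it introduces \emph{edge-selection} blocks $Y_{ij}^\ell$ (one element per host edge in $E_{ij}$), makes several copies of each block, and uses interval sets $A_{i,u}^\ell,B_{ij,uv}^\ell$ with a cyclic wrap-around of the copy index $\ell$ to force the element chosen in every copy to be \emph{the same} (a chain of inequalities that closes into equalities). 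The actual adjacency constraint is then implemented by sets $C_{ij,u},C'_{ij,u}$ (a prefix of $X_i$ union a half of $Y_{ij}$) and $D_{ij,uv}$ (two half-intervals in $Y_{ij}$ plus a single element in $Y_{ji}$), after which the budget becomes $k'=9k$, not $p$. Your VC-dimension sketch is also too optimistic: in the paper several sets of different types intersect two distinct ground blocks, so ``disjointness of distinct blocks'' does not immediately kill triples, and one must first strip the lone exceptional element from the $D$-sets and then argue via the interval structure (Observation~3); the dual case needs a similar two-step argument. In short, you have the right skeleton, but the actual gadget --- edge variables, multiple copies, cyclic equalizers, and the half-interval link sets --- is missing, and so is the real VC analysis that depends on it.
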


\noindent{\em Note.} 
Theorem~\ref{thm1} could be stated with $|X|$ replaced by $|\Cc|$ or $|\Cc|\cdot|X|$, which are perhaps more natural as a measure of input length. 
However, the Sauer-Perles-Shelah lemma (see e.g.~\cite{Sauer72})
states that if $VC(\Ff) = d$, then
$|\Cc| \leq \sum_{j=0}^{d}{|X| \choose j}$.
Therefore, $|X|$ and $|\Cc|$ are within a polynomial factor of each other, which allows us to use $|X|$.


The hardness result of Theorem~\ref{thm1} can be strengthened to set systems with symmetric incidence matrices, i.e.\ the result also holds for Dominating Set. The construction is more involved in that case, and we omit it in this version of the paper.

On the positive side, given a set system $\Ff$, if $VC(\Ff)=1$ \emph{or} $VC(\Ff^T)=1$, we show that Hitting Set is in P. The proof is simple and self-contained (see \textsection\,\ref{sec:simple}). The $VC(\Ff)=1$ case was known prior to this work~\cite{stack}, but we are not aware of a published proof.


 \begin{wrapfigure}[10]{l}[0.1\textwidth]{0.73\textwidth}
\begin{center}  
\vspace*{-0.1in}
\includegraphics[width=0.65\textwidth]{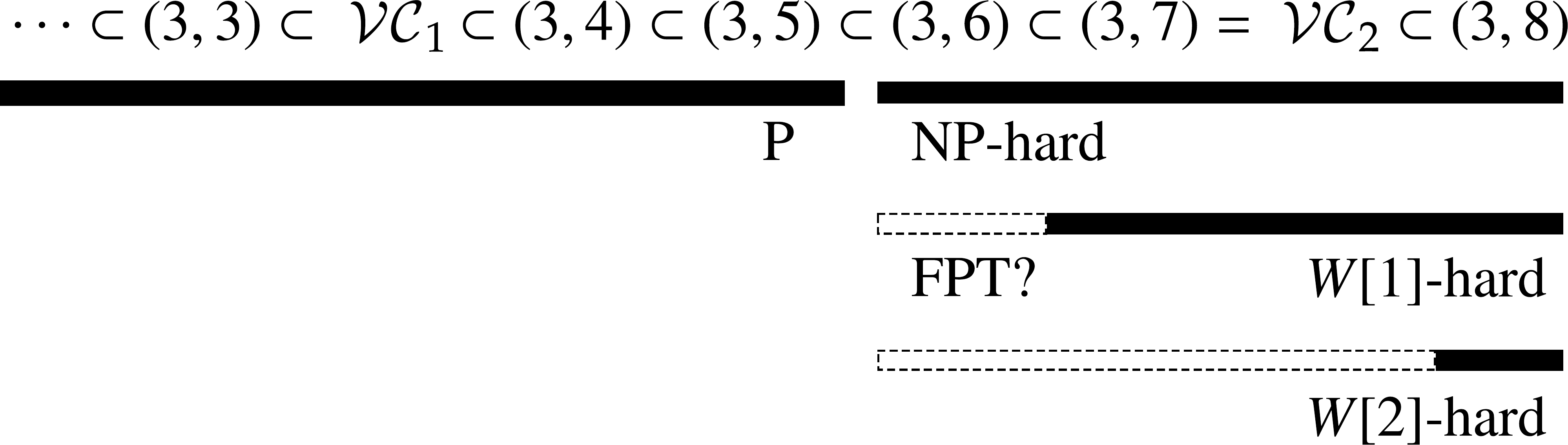}
\end{center}
	\protect\caption{The complexity of Hitting Set when the VC-dimension is low.\label{fig:summary}}
\end{wrapfigure}

To bridge the rather large gap in complexity between set systems of VC-dimension $1$ and $2$, we use a finer parameterization which was also used in~\cite{DBLP:journals/eccc/AlonMY14,MoranSWY15}. For a pair of integers $\alpha,\beta \geq 1$, a set system $\Ff = (X, \Cc)$ is an $(\alpha,\beta)$-system if for any set $A \subseteq X$ with $|A| \le \alpha$ the projection $PR_\Ff(A)$ has cardinality at most $\beta$. In other words, a set system is an $(\alpha, \beta)$-system, if every submatrix of its incidence matrix with $\alpha$ columns has at most $\beta$ different vectors in its rows. Let $\mathcal{VC}_d$ denote the class of set systems with VC-dimension at most $d$. Observe that $\mathcal{VC}_{d-1}$ is equal to the class of $(d,2^d-1)$-systems. Moreover, every set system is a $(d,2^d)$-system, for arbitrary 
$d \geq 1$. Hence, Hitting Set on $(3,8)$-systems is the standard Hitting Set problem (without restrictions) and thus $W[2]$-hard. 
 
The Sauer-Perles-Shelah Lemma can also be stated using this notation: Every $(d,2^d-1)$-system is a $(m,\sum_{j=0}^{d-1}{m \choose j})$-system for every $m\geq d$. In particular, every set system in $\mathcal{VC}_1$ is a $(3,4)$-system. Further, we prove that every Edge Cover instance is a $(3,5)$-system, but the reverse does not hold. Edge Cover is well-known to be solvable in polynomial time using matching techniques~\cite{gareyjohnson}. The next result (see \textsection\,\ref{sec:simple}) extends the domain of polynomial-time solvability from Edge Cover to the larger class of $(3,5)$-systems.
\begin{theorem}\label{thm2}
  Hitting Set on $(3,5)$-systems is in $P$.
\end{theorem}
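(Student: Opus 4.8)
The plan is to show that Hitting Set on $(3,5)$-systems reduces to a polynomial-time-solvable problem, by first understanding exactly which ``forbidden patterns'' a $(3,5)$-system avoids, and then exploiting this structure the same way Edge Cover is solved: via matching. Recall a $(3,5)$-system is one in which no three columns of the incidence matrix realize all $8$ boolean patterns \emph{or} all but two of them; equivalently, on every triple of elements $\{x,y,z\}$ the projection $PR_\Ff(\{x,y,z\})$ omits at least three of the $8$ subsets of $\{x,y,z\}$. The first step is a careful case analysis of which families of $\le 5$ subsets of $\{x,y,z\}$ can arise, up to symmetry; the key structural consequence I expect to extract is that ``large'' sets behave rigidly. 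Concretely, I would try to prove: if $R\in\Cc$ has $|R|\ge 3$, then $R$ together with the rest of $\Cc$ is so constrained that any such $R$ can be hit by a bounded number of ``canonical'' elements, and more importantly, that after a cleanup step we may assume \emph{every set in $\Cc$ has size at most $2$}. If that reduction goes through, then $(X,\Cc)$ is essentially a graph (sets of size $2$ are edges, sets of size $1$ are forced vertices, and the empty set makes the instance infeasible), and Hitting Set on a graph is Vertex Cover — but Vertex Cover is NP-hard, so this naive hope is \emph{too strong} and must fail; the $(3,5)$ condition must additionally forbid enough edge-patterns to land us inside a tractable subclass of Vertex Cover.

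So the corrected plan is: reduce to size-$\le 2$ sets, and then show the resulting graph $G$ (on vertex set $X$, plus a set $F\subseteq X$ of ``mandatory'' vertices coming from singletons) is of a restricted type on which Vertex Cover is polynomial. The natural candidate, given that Edge Cover is the motivating example and is solved by maximum matching, is that $G$ must be a disjoint union of stars (a graph with no $P_4$ and no triangle, i.e.\ each component has a center), or something close to it — a class on which minimum Vertex Cover equals the number of non-trivial components and is computable directly. I would verify this by checking that a $P_4$ or a triangle among size-$2$ sets, looked at through the right triple of vertices, forces $6$ or more projection patterns on some triple, contradicting $(3,5)$. Then minimum Hitting Set of $(X,\Cc)$ is: infeasible if $\emptyset\in\Cc$; otherwise take all of $F$, discard edges already hit by $F$, and for the remaining star forest add one center per component (or handle isolated edges via a matching argument). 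Polynomiality is then immediate.

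The main obstacle is the reduction to size-$\le 2$ sets, and the honest execution of the triple-by-triple case analysis that underlies both that reduction and the star-forest claim: there are many sub-configurations of $\le 5$ patterns on $3$ elements, and one must check for each that it either already rules out large sets / bad subgraphs or propagates a usable constraint. I anticipate two subtleties. First, a set of size $\ge 3$ need not be eliminable outright — e.g.\ a single large set $R=X$ is fine — so the ``reduction'' is really a statement that large sets are \emph{dominated} (some size-$\le 2$ set forces a hitting vertex inside $R$) or can be pre-satisfied; getting the bookkeeping right so that solving the reduced instance solves the original is the delicate part. Second, the interaction between singletons ($F$) and the graph must be handled so that ``hit everything in $F$, then cover the rest'' is actually optimal, which again follows from the star structure but needs the case analysis to confirm that no size-$2$ set straddles two stars in a way that breaks local optimality. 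Once the structural lemma ``every $(3,5)$-system, after removing pre-satisfied and dominated sets, is a star forest plus forced vertices'' is in hand, Theorem~\ref{thm2} follows in a few lines, and it transparently contains Edge Cover (where $G$ is the input graph itself, a subclass of the allowed instances) as promised.
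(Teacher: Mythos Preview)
Your approach has a genuine gap, and it diverges from the paper's in a direction that does not work.

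You aim to reduce until every set has size at most $2$, landing in Vertex Cover, and then argue that the $(3,5)$ condition forces the resulting graph to be a star forest. That structural claim is false. A path $P_n$ (sets $\{v_i,v_{i+1}\}$) is a $(3,5)$-system: on any triple of vertices the pattern $111$ is missing, $101$ is missing (no edge skips a vertex), and a short case check shows at most five patterns ever appear. Likewise $C_6$ and $K_4$ minus an edge are $(3,5)$-systems. None of these is a star forest, so your proposed lemma (``$(3,5)$ plus size-$2$ sets $\Rightarrow$ star forest'') fails, and with it the polynomial-time algorithm you derive from it. More broadly, it is not clear how to characterize the graphs whose Vertex Cover instance is a $(3,5)$-system in a way that yields tractability, and the paper does not attempt this.

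The paper reduces in the \emph{dual} direction: rather than shrinking sets, it shrinks element degrees until every element lies in exactly two sets, which is Edge Cover (elements are edges, sets are vertices) and hence polynomial via matching. The preprocessing that achieves this is: split into connected components; if some triple fails to realize $000$ then the optimum is at most $3$ and brute force finishes; remove dominated elements and superset-sets; peel off singletons; and, crucially, if an element $x$ lies in three or more sets, put $x$ in the solution. The nontrivial work is justifying this last rule. The paper proves (Lemma~\ref{claim2}) that after the earlier rules any two sets intersect in at most one element, and then (Lemma~\ref{claim1}) uses this together with the $(3,5)$ constraint to show that any hitting set avoiding such an $x$ can be strictly improved by swapping in $x$. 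The case analysis you anticipated does occur, but it lives in the proofs of these two lemmas and concerns the interaction of a high-degree element with its incident sets, not the global shape of a size-$2$ graph.
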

The algorithm we present for proving Theorem~\ref{thm2} is fairly simple.
However, its analysis is quite involved -- revealing some of the combinatorial structure underlying $(3,5)$-systems.

In contrast to $(3,5)$-systems, it is not hard to see that there are $(3,6)$-systems for which the Hitting Set problem is NP-hard. 
\begin{theorem}\label{thm3}
Hitting Set on $(3,6)$-systems is NP-hard.
\end{theorem}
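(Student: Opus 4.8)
The plan is to give a polynomial-time many-one reduction to Hitting Set on $(3,6)$-systems from \textsc{Vertex Cover} restricted to \emph{triangle-free} graphs, relying on the (standard) fact that the latter problem is still NP-hard. Given a graph $G=(V,E)$, I would associate with it the set system $\Ff_G=(V,E)$ whose universe is the vertex set and whose sets are the edges of $G$, each regarded as a two-element subset of $V$ (isolated vertices of $G$ may be discarded). A hitting set of $\Ff_G$ is then exactly a vertex cover of $G$, of the same cardinality, and $|V|+|E|$ is polynomial in the size of $G$. So the reduction boils down to two claims: \emph{(i)} $\Ff_G$ is a $(3,6)$-system whenever $G$ is triangle-free, and \emph{(ii)} \textsc{Vertex Cover} is NP-hard on triangle-free graphs.

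For \emph{(i)}: fix $A\subseteq V$ with $|A|\le 3$ and bound $|PR_{\Ff_G}(A)|$. If $|A|\le 2$ then $PR_{\Ff_G}(A)\subseteq 2^A$ has at most $4\le 6$ elements, so assume $A=\{a,b,c\}$. Since every set of $\Ff_G$ has exactly two elements, the trace $e\cap A$ of an edge $e$ is either $\emptyset$ (when $e$ avoids $A$), a singleton $\{a\},\{b\}$ or $\{c\}$ (when $e$ meets $A$ in one endpoint), or a pair $\{a,b\},\{b,c\}$ or $\{a,c\}$ (when both endpoints of $e$ lie in $A$); the whole set $A$ never occurs as a trace. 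Hence $PR_{\Ff_G}(A)$ is contained in a family of seven vectors, and all seven are realised only if the three edges $\{a,b\},\{b,c\},\{a,c\}$ all lie in $E$, i.e.\ only if $G$ contains the triangle on $\{a,b,c\}$. Thus for triangle-free $G$ at least one trace is absent and $|PR_{\Ff_G}(A)|\le 6$. (This also shows that triangle-freeness is precisely the condition that keeps the count at $6$, and is the only point of the argument that carries any real content.)

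For \emph{(ii)} I would invoke the double-subdivision trick: from an arbitrary graph $H$ build $H''$ by replacing each edge $uv$ with a path on four vertices $u,x_{uv},y_{uv},v$, where $x_{uv},y_{uv}$ are new. Since the new vertices have degree $2$, every cycle of $H''$ is a subdivision of a cycle of $H$ with each edge tripled; hence $H''$ has girth at least $9$ and in particular is triangle-free. A routine exchange argument gives $\alpha(H'')=\alpha(H)+|E(H)|$, equivalently $\tau(H'')=\tau(H)+|E(H)|$. As $H\mapsto H''$ is polynomial-time computable, a polynomial-time algorithm for \textsc{Vertex Cover} on triangle-free graphs would solve \textsc{Vertex Cover} on arbitrary graphs; this proves \emph{(ii)}. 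Composing, $G\mapsto\Ff_G$ is the desired reduction and Theorem~\ref{thm3} follows.

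I do not expect any genuinely hard step here — this is the ``not hard to see'' phenomenon made precise: the edge set system of a graph is a $(3,6)$-system exactly when the graph is triangle-free, and triangle-free \textsc{Vertex Cover} is NP-hard. If one insists on a self-contained treatment instead of citing the latter, the one computation to carry out is the identity $\alpha(H'')=\alpha(H)+|E(H)|$ for the double subdivision, which is immediate from a short local-replacement argument and is the closest thing in this proof to an obstacle.
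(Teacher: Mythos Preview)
Your proposal is correct and follows essentially the same route as the paper: reduce from \textsc{Vertex Cover} on triangle-free graphs, observing that the edge set system of a triangle-free graph is a $(3,6)$-system (since the pattern $111$ never occurs for size-$2$ sets and at least one of $110,101,011$ is absent), and that triangle-free \textsc{Vertex Cover} is NP-hard via the double-subdivision with $\tau(H'')=\tau(H)+|E(H)|$. Your write-up is slightly more detailed than the paper's sketch but matches it step for step.
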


\if 0
Such instances that arise from this reduction have the property that any two sets intersect at atmost one element.  We consider hitting set instances in which the size of an all-ones-submatrix is of a fixed order in the incidence matrix associated with the set system.  On the classes of instances we show that Hitting set is FPT.   Our upper bounds are based on kernelization arguments, which are one way of designing FPT algorithms. One way of kernelization is to identify an element that is the only element common many sets, and conclude that the common element must be in any small hitting set.  This is the kernelization approach for vertex cover \cite{Nie06}.  Here we generalize this approach to identify a subset of elements that is common to many subsets, and then conclude that any small hitting set must contain an element from the subset of common elements.  This line of study generalizes the kernelization approach by Langerman and Morin \cite{LangermanM05} in which they study the complexity of hitting set when no two sets intersect at more than a fixed number of elements.  
\fi

This discussion yields a complete characterization of the complexity of Hitting Set on $(3,\beta)$-systems with a transition from polynomially-solvable to NP-hard between the $\beta$ values of 5 and 6. Regarding the FPT status of the problem, the picture is almost complete, with the question of $(3,6)$-systems remaining open. The results are illustrated in Figure~\ref{fig:summary}.\\

\if 0
\nari{
\lk{todo: check if these examples can easily be added to the table or not}
In particular, "Covering points with intervals" reduces to hyperplane arrangement as follows: Put the points
along the circle $S^1$ and choose the halfspaces such that their
intersection with  $S^1$ is the desired interval.  
Other set systems with some  constant VC-dimension appear in visibility graphs, see for example, a constant upper bound on the VC dimension of a range space in which the ranges are the {\em visible} points, of an arbitrary set of 15 points in a simple polygon, from each point of the polygon, due to Gilbers and Klein \cite{GilbersKlein14}.
"VC-dimension of Exterior Visibility"  Typically they define some points and some cameras and the sets are the set of points that are "visible" from a given camera.  The problem is that usually the VC dimension is finite but higher than 2.
In these contexts the {\em coverage} problem, as in Art Gallery Problems, is the one of interest, and this is
just the set cover problem.  
Further set cover is just hitting set on the dual set system.   }
\fi

\vspace{-0.1in}
\noindent{\bfseries Open questions.} An immediate open question raised by our work is whether Hitting Set is FPT on $(3,6)$-systems. The $(\alpha, \beta)$-parameterization provides an ever finer hierarchy of set systems, as $\alpha$ increases. A more ambitious goal would be a full characterization of the complexity of Hitting Set in $(\alpha, \beta)$-systems for $\alpha \geq 3$. 
We only have preliminary results in this direction. Finally, we leave open the question whether Hitting Set is even $W[2]$-hard on set systems of bounded VC-dimension.\\

\vspace{-0.1in}
\noindent{\bfseries Related work.}
Langerman and Morin~\cite{LangermanM05} study the parameterized complexity of an abstract covering problem with a dimension parameter that has some connections to the VC-dimension.
However, the results are not directly comparable with ours: The instances studied by Langerman and Morin can have arbitrarily large VC-dimension and are restricted by other conditions, whereas the instances we study have very low VC-dimension, but have no further constraints.\\

\vspace{-0.1in}
\noindent{\bfseries Notation.}
Consider a set system $\Ff = (X, \Cc)$. Let $b_1,\ldots,b_t \in X$ be distinct elements and $(p_1,\ldots,p_t) \in \{0,1\}^t$. We say that $(b_1,\ldots,b_t)$ \emph{realizes the pattern} $p_1\ldots p_t$ if the set $\{b_i \mid p_i=1 \}$ is contained in $PR_{\Ff}(\{b_1,\ldots,b_t\})$.


\section{Hitting Set with VC-dimension $2$ is $W[1]$-hard}
\label{sec:hardness}
In this section we prove the $W[1]$-hardness of Hitting Set and Set Cover on set systems of VC-dimension $2$ and dual VC-dimension $2$. The NP-hardness of this class was known, implied for example, by the NP-hardness of Vertex Cover.

\begin{duplicate}[restated]
Hitting Set and Set Cover restricted to set systems $\Ff=(X,\Cc)$ with $VC(\Ff) = VC(\Ff^T) = 2$ are W[1]-hard. Moreover, if any of these problems can be solved in time $f(k) \cdot |X|^{o(k / \log k)}$, where $f$ is an arbitrary function and $k$ is the solution size, then ETH fails. 
\end{duplicate}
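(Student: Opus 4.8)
The plan is to reduce from a standard $W[1]$-hard problem with the right structure to control the VC-dimension. The natural candidate is Multicolored Clique (or equivalently Multicolored Independent Set / Partitioned Subgraph Isomorphism), which is $W[1]$-hard and, under ETH, has no $f(k)\cdot n^{o(k)}$ algorithm. Given a graph $G$ with vertices partitioned into $k$ color classes $V_1,\dots,V_k$, we want a hitting set instance whose small hitting sets encode a choice of one vertex per class plus a verification that the chosen vertices are pairwise adjacent. The elements of $X$ will correspond to vertices of $G$ (and possibly auxiliary gadget elements), and the sets in $\Cc$ will be: (i) for each color class $V_i$, a set forcing the hitting set to pick at least one vertex of that class; and (ii) for each non-edge $\{u,v\}$ with $u\in V_i$, $v\in V_j$, a gadget that is ``hit for free'' unless both $u$ and $v$ are selected, penalizing the selection of non-adjacent pairs. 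The budget $k$ (or $\binom{k}{2}$-scaled variant) is set so that a hitting set of that size exists iff $G$ has a multicolored clique. To get the $|X|^{o(k/\log k)}$ lower bound rather than $|X|^{o(k)}$, I expect we cannot afford one element per vertex with a linear budget; instead we encode each vertex choice in $O(\log n)$ bits, using $\Theta(k\log n)$ elements and a budget of $\Theta(k\log n)$, which blows up the parameter by a $\log$ factor and yields the stated $o(k/\log k)$ bound via the standard ETH machinery (as in, e.g., lower bounds for Dominating Set on low-degeneracy graphs).

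**The delicate part** — and the main obstacle — is keeping both $VC(\Ff)$ and $VC(\Ff^T)$ equal to $2$ while the gadgets still enforce the clique condition. VC-dimension $2$ is extremely restrictive: no three elements may realize all $8$ patterns, and dually no three sets may have all $8$ intersection patterns among some three elements. The forced-choice sets (type (i)) and the non-edge penalty gadgets (type (ii)) must therefore be designed so that their incidence matrix avoids every $3\times 3$ all-pattern configuration, in both the primal and the dual sense. The binary-encoding trick tends to create rich projection structure, so the gadget design must be done carefully — likely using a ``laminar-like'' or ``interval-like'' arrangement of the sets so that projections onto any three elements are automatically limited. A clean way to argue the VC bound is to exhibit, for the final construction, an explicit linear or cyclic order on $X$ (and dually on $\Cc$) witnessing that every set is an interval (or a bounded union of intervals) in that order, since interval systems on a line have VC-dimension $2$; any slack from ``bounded union'' must be absorbed without pushing the dimension to $3$.

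**Concretely, the steps** I would carry out, in order: (1) Recall the $W[1]$-hardness and ETH-hardness of Multicolored Clique / Partitioned Subgraph Isomorphism, including the precise form of the ETH lower bound ($f(k)\cdot n^{o(k)}$ is impossible); (2) Describe the element set $X$: for each color class a block of $O(\log |V_i|)$ ``coordinate'' elements, plus auxiliary elements per block to force a consistent selection; (3) Describe $\Cc$: ``consistency/choice'' sets per block that force the hitting set to encode a valid vertex index, and ``compatibility'' sets per ordered pair of color classes that are hit unless the two encoded vertices span a non-edge — so a budget-respecting hitting set exists iff the encoded $k$-tuple is a clique; (4) Verify correctness of the reduction (a multicolored clique gives a hitting set of the target size, and conversely); (5) Verify $VC(\Ff)=2$ by exhibiting the ordering/interval structure and checking no $3$ elements are shattered, then argue the dual bound $VC(\Ff^T)=2$ symmetrically (this is where most of the technical work lives); (6) Track parameters: the solution size is $k'=\Theta(k\log n)$ and $|X|=\mathrm{poly}(n)$, so an $f(k')\cdot |X|^{o(k'/\log k')}$ algorithm would give $f(k)\cdot n^{o(k)}$ for Multicolored Clique, contradicting ETH; the $W[1]$-hardness follows since $k'$ depends only on $k$ and $\log n$, and a standard argument (or a parallel reduction with linear parameter blow-up but weaker ETH bound) gives plain $W[1]$-hardness. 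I would flag Step (5) — simultaneously certifying primal and dual VC-dimension $2$ for gadgets rich enough to encode clique — as the crux, and the place where the construction will need to be least obvious.
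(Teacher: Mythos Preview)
Your parameter accounting in step~(6) is the real gap, and it breaks both conclusions. A reduction with $k'=\Theta(k\log n)$ is \emph{not} an FPT reduction, since the new parameter depends on $n$; so it does not establish $W[1]$-hardness (your parenthetical ``standard argument'' does not exist --- you would need a separate reduction with parameter bounded by a function of $k$ alone, and you have not described one). The ETH arithmetic also fails: with $k'=\Theta(k\log n)$ and $|X|=n^{O(1)}$ one has $k'/\log k'=\Theta(k\log n/\log\log n)$ for $n\gg k$, so an $f(k')\,|X|^{o(k'/\log k')}$ algorithm would only yield running time $f(k\log n)\cdot n^{o(k\log n/\log\log n)}$, which is not $g(k)\cdot n^{o(k)}$ and contradicts nothing. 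The $o(k/\log k)$ in the theorem is \emph{not} supposed to arise from a logarithmic blow-up in your reduction; it is inherited from the source problem. The paper reduces from Partitioned Subgraph Isomorphism (arbitrary pattern $H$), which by Marx's result already has no $f(k)\cdot n^{o(k/\log k)}$ algorithm under ETH, and uses a \emph{linear} parameter blow-up $k'=9k$.

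Because the blow-up is linear, the encoding is direct rather than binary: elements are (multiple copies of) vertices $x_{i,u}^\ell$ and edges $y_{ij,uv}^\ell$, organised into disjoint ``ground sets''. The sets are of the form ``a prefix in one ground set together with a suffix in a second ground set'' (plus one family with a single extra element, handled by a small separate argument). This is the structural idea you were groping for with ``interval-like'', but the precise formulation is what makes the VC arguments go through: every set meets at most two ground sets, is an interval inside each, and any two sets touching the same pair of ground sets share the same extreme endpoint in each. From these three facts one gets $VC(\Ff)\le 2$ and $VC(\Ff^T)\le 2$ by short case analyses. Your proposed ``non-edge penalty'' gadgets and bit-encoding are a different design and there is no evident way to force them into this two-interval-with-shared-endpoint shape; I would abandon the binary encoding entirely and work with ordered vertex/edge copies as above.
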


In the remainder of this section we prove Theorem~\ref{thm1}.
Since Hitting Set on a set system $\Ff$ is equivalent to Set Cover on set system $\Ff^T$, it suffices to prove hardness of Hitting Set.

We reduce to Hitting Set from the Partitioned Subgraph Isomorphism problem: Given a host graph $G = (V,E)$ with a partitioning of the vertices $V = V_1 \cup \ldots \cup V_t$ and a pattern graph $H = ([t],F)$ with $|F| = k$, decide whether there are vertices $u_1 \in V_1,\ldots,u_t \in V_t$ such that $u_i u_j \in E$ for every $ij \in F$. It is known that Partitioned Subgraph Isomorphism is W[1]-hard and cannot be solved in time $f(k) \cdot n^{o(k / \log k)}$, where $n = |V|$, and $f$ is an arbitrary function, unless ETH fails~\cite{Marx07}.

We first show that we may assume the hard instance to have $t=k$, i.e.\ that the number of vertices and the number of edges in $H$ are equal. 
Consider an arbitrary instance of Partitioned Subgraph Isomorphism. Since Partitioned Subgraph Isomorphism splits naturally over connected components of~$H$, we may assume that $H$ is connected. 
If $k > t$, add $k-t$ isolated vertices to $H$, and add the corresponding partitions containing isolated vertices $V_{t+1} = \{v_{t+1}\},\ldots,V_k=\{v_k\}$ to $G$, without changing the existence of a solution. Observe that the parameter $k$ is unchanged. In the case when $k < t$, since $H$ is connected, it follows that $k=t-1$.  We add two components to $H$: a clique on $4$ vertices and an isolated vertex. To $G$ we add the partitions $V_{t+1}=\{v_{t+1}\}, \ldots, V_{t+5}=\{v_{t+5}\}$ such that $v_{t+1},\ldots,v_{t+4}$  form a clique, and $v_{t+5}$ is an isolated vertex.
 After the transformation, $H$ contains $k+6 = t+5$ edges and vertices. 
Furthermore, the equivalence of the solutions is preserved, and the parameter $k$ (the number of edges in $H$), increases by a constant only. 
 

For ease of notation we let $E \subseteq [n] \times [n]$ and write $uv$ for an edge in $E$. Since $G$ is undirected, the set $E$ contains $uv$ if and only if it contains $vu$. Similarly, $F \subseteq [k] \times [k]$ and $ij \in F$ if and only if $ji \in F$. We fix any ordering $<$ on $V$ and the  lexicographic\footnote{{$uv<wz\iff (u<w\lor(u=w\land v<z))$.}} ordering $<$ on $V \times V$ and thus on $E$. We write $E_{ij} := E \cap (V_i \times V_j)$.

We construct an equivalent Hitting Set instance $\Ff$. We start by defining $\Ff$ and proving correctness, and later prove $VC(\Ff) = 2$ and $VC(\Ff^T) = 2$.
\subsection{Construction of $\Ff$}

We construct a set system $\Ff=(X,\Cc)$ as follows. The elements of $X$ are
\begin{align*}
  x_{i,u}^\ell \qquad & \text{ for } i \in [k], \, u \in V_i, \, \ell \in [2 \deg_H(i)],  \\
  y_{ij,uv}^\ell \qquad & \text{ for } ij \in F, \, uv \in E_{ij}, \, \ell \in [5].
\end{align*}
It will be convenient to structure these elements into disjoint \emph{ground sets} $X_i^\ell = \{x_{i,u}^\ell \mid u \in V_i\}$ and $Y_{ij}^\ell = \{y_{ij,uv}^\ell \mid uv \in E_{ij}\}$. We will force each hitting set to pick exactly one element from every ground set; these elements will encode the desired copy of $H$ (should it exist).

In the remainder we define the sets in $\Cc$. First we introduce the following sets of $\Cc$.
\begin{align*}
  A_{i,u}^\ell &= \{ x_{i,v}^\ell \mid v < u \} \cup \{ x_{i,v}^{\ell+1} \mid v \ge u \}, \qquad  \text{ for } i \in [k], u \in V_i,  \, \ell \in [2 \deg_H(i)], \\
  B_{ij,uv}^\ell &= \{ y_{ij,wz}^\ell \mid wz < uv \} \cup \{ y_{ij,wz}^{\ell+1} \mid wz \ge uv \}, \qquad  \text{ for } ij \in F, \, uv \in E_{ij}, \, \ell \in [5]. 
\end{align*}

Here, $x_{i,v}^{\ell+1}$ is to be interpreted as $x_{i,v}^{1}$ for $\ell = 2 \deg_H(i)$, and $y_{ij,wz}^{\ell+1}$ as $y_{ij,wz}^1$ for $\ell = 5$, i.e.\ there is a wrap-around of the index $\ell$.
Note that the disjoint ground sets appear as sets $A_{i,u}^\ell$ (where $u$ is the smallest vertex in $V_i$) and $B_{ij,uv}^\ell$ (where $uv$ is the lexicographic smallest edge in $E_{ij}$). Hence, any hitting set of $\Ff$ contains at least one element of every ground set.

Note that the total number of ground sets is
\begin{align*}
k' =  
5 \, |F| + \sum_{i \in [k]} 2 \deg_H(i) = 9k.
\end{align*}

We set the number of vertices to be chosen in the hitting set to $k'$, i.e.\ from now on we only consider hitting sets of size $k'$ of $\Ff$.
{Since there are exactly $k'$ ground sets, and they are mutually disjoint, it follows that 
any hitting set of $\Ff$ of size $k'$ contains exactly one element $x_{i,u(i,\ell)}^\ell$ of any ground set $X_i^\ell$, and exactly one element $y_{ij,e(ij,\ell)}^\ell$ of any ground set $Y_{ij}^\ell$.}
Moreover, observe that hitting the set $A_{i,u}^\ell$ implies{
$u(i,\ell) < u \vee u(i,\ell+1) \ge u$. 

This holds for all $u\in V$, and so} $u(i,\ell) \le u(i,\ell+1)$ for all $\ell$. Since there is a cyclic wrap-around of~$\ell$ {it follows that $u(i,\ell)=u(i,\ell+1)$ for all $\ell$, and so let $u_i \in V_i$ such that $u(i,\ell) = u_i$ for all $\ell$.} Similarly, the sets $B_{ij,uv}^\ell$ ensure that $e(ij,\ell) = e_{ij}$ for all $\ell$ and some $e_{ij} = v_{ij} w_{ij} \in E_{ij}$.

Observe that the picked edges $e_{ij} = v_{ij} w_{ij}$ form a subgraph of $G$. This subgraph is isomorphic to $H$ if we additionally ensure $u_i = v_{ij}$ and $u_j = w_{ij}$ for all $ij \in F$. To this end, we introduce the sets $C_{ij,u}$ and $C'_{ij,u}$ for $ij \in F$, $u \in V_i$. If $ij$ is the $d$-th edge incident to vertex $i$ in $H$, then we set
\begin{align*}
  C_{ij,u} &= \{ x_{i,v}^{2d-1} \mid v < u \} \cup \{ y_{ij,wz}^1 \mid w \ge u, \, z\in V_j \},  \\
  C'_{ij,u} &= \{ x_{i,v}^{2d} \mid v > u \} \cup \{ y_{ij,wz}^2 \mid w \le u, \, z\in V_j \}.
\end{align*}
Observe that this ensures $u_i = v_{ij}$ for all $ij \in F$. Indeed, fixing $u_i$ the sets $C_{ij,u_i}$ and $C'_{ij,u_i}$ are only hit if we choose $y_{ij,v_{ij} w_{ij}}^1$ with $v_{ij} \ge u_i$ and $y_{ij,v_{ij} w_{ij}}^2$ with $v_{ij} \le u_i$.

We implement the remaining condition $u_j = w_{ij}$ indirectly by introducing the sets 
\begin{align*}
  D_{ij,uv} = \{y_{ij,wz}^3 \mid wz < uv \} \cup \{y_{ij,wz}^5 \mid wz > uv \} \cup \{ y_{ji,vu}^4 \}, \qquad \text{ for } ij \in F, \, i < j, \, uv \in E_{ij}.
\end{align*}
This encodes the formula $e_{ij} \ne uv \vee e_{ji} = vu$ for all $uv \in E_{ij}$, and thus ensures that $v_{ij} = w_{ji}$ and $w_{ij} = v_{ji}$ for all $ij \in F$ with $i<j$ (and thus also for all $ij \in F$ without the condition $i<j$). This indirectly encodes the restriction $u_j = w_{ij}$, since $u_j = v_{ji}$ (by the sets of type $C_{ji,*}$ and $C'_{ji,*}$) and $v_{ji} = w_{ij}$ (by the sets of type $D_{ji,*}$).

In total, any hitting set of $\Ff$ of size $k'$ yields a subgraph of $G$ that is equal to $H$. It is easy to show that the inverse holds as well: If $u_1 \in V_1,\ldots,u_{k} \in V_k$ induce a copy of $H$ in~$G$, then picking the elements $x_{i,u_i}^\ell$ and $y_{ij,u_i u_j}^\ell$ for all $ij,\ell$ yields a hitting set of $\Ff$ of size~$k'$.
This shows the correctness of our construction. 

We show that  $VC(\Ff) = VC(\Ff^T) = 2$ in the next two sections.
Since $k' = O(k)$, $|\Ff| = n^{O(1)}$, and the construction of $\Ff$ can be done in polynomial time, W[1]-hardness of Hitting Set restricted to $VC(\Ff) = VC(\Ff^T) = 2$ follows, and any $f(k') |\Ff|^{o(k'/\log k')}$ algorithm for this problem would yield an $f(k) n^{o(k/\log k)}$ algorithm for Partitioned Subgraph Isomorphism, contradicting ETH.

\subsection{VC-dimension $2$}

It is easy to see that in general $VC(\Ff)$ can be at least 2, e.g., the elements $x_{i,1}^2, x_{i,2}^2$ are shattered by the sets $A_{i,1}^{1}$ (pattern 11), $A_{i,2}^1$ (pattern 01), $A_{i,2}^2$ (pattern 10), and any set of type $B$ (pattern 00).

To prove that $\Ff$ has VC-dimension at most 2, we first argue that we can remove the single element $y_{ji,vu}^4$ from $D_{ij,uv}$, i.e.\ we replace any set $D_{ij,uv}$ by 
$$D_{ij,uv}^* := D_{ij,uv} \setminus \{y_{ji,vu}^4\} = \{y_{ij,wz}^3 \mid wz < uv \} \cup \{y_{ij,wz}^5 \mid wz > uv \},$$ to obtain a set system $\Ff^*$. 
We claim that if there are elements $a,b,c$ realizing the patterns $110,101,011,111$ in $F$ then these elements also realize these patterns in $\Ff^*$. Indeed,
assume for the sake of contradiction that there are elements $a,b,c$ realizing all of the patterns $110$, $101$, $011$, and $111$ in $\Ff$ but not in $\Ff^*$. Then without loss of generality, {for some $ij\in F, uv\in E_{ij}$}, $a = y_{ji,vu}^4$ and $b \in D_{ij,uv}\setminus\{a\} = D^*_{ij,uv}$. Now, there is only one set in $\Ff$ containing both $a$ and $b$, namely $D_{ij,uv}$ {(since $D_{ij,uv}$ is the only set which intersects both $Y_{ji}^4$ and $Y_{ij}^3\cup Y_{ij}^5$).} Thus, one of the patterns $110$ and $111$ is missing, contradicting the assumption that $a,b,c$ realize all patterns $110$, $101$, $011$, and $111$. Hence, if we show that $\Ff^*$ contains no three elements realizing all patterns $110$, $101$, $011$, and $111$, then no three elements of $\Ff$ are shattered.

To this end, we first lift the ordering of $V$ and the lexicographic ordering of $E$ to orderings on the ground sets, i.e.\ for $u < v$ we set $x_{i,u}^\ell < x_{i,v}^\ell$ and for $uv < wz$ we set $y_{ij,uv}^\ell < y_{ij,wz}^\ell$.
We use the following crucial observation about this ordering and $\Ff^*$.

\begin{obs} \label{obs:vctwo}
  Any set system in $\Ff^*$ intersects at most two ground sets.
  Any set system in $\Ff^*$ restricted to any ground set $S$ forms an \emph{interval} (with respect to the ordering on $S$). Moreover, for any pair of ground sets $S_1 \ne S_2$, the sets of $\Ff^*$ intersecting both $S_1$ and $S_2$ either all intersect in the smallest element of $S_1$ or all intersect in the largest element of $S_1$.
\end{obs}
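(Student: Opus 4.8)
The plan is to verify each of the three assertions by going through the four types of sets $A$, $B$, $C$/$C'$, and $D^*$ one by one, since in each case the relevant claim is immediate from the defining formula. First I would check that every set intersects at most two ground sets: sets $A_{i,u}^\ell$ live inside $X_i^\ell \cup X_i^{\ell+1}$ (two ground sets, or one when $u$ is minimal); sets $B_{ij,uv}^\ell$ live inside $Y_{ij}^\ell \cup Y_{ij}^{\ell+1}$; sets $C_{ij,u}$ live inside $X_i^{2d-1} \cup Y_{ij}^1$ and $C'_{ij,u}$ inside $X_i^{2d} \cup Y_{ij}^2$; and the truncated sets $D^*_{ij,uv} = \{y_{ij,wz}^3 : wz<uv\} \cup \{y_{ij,wz}^5 : wz>uv\}$ live inside $Y_{ij}^3 \cup Y_{ij}^5$. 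So in all cases at most two ground sets are touched.

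Next, for the ``interval'' claim I would note that with the ordering lifted from $V$ (resp. the lexicographic ordering on $E$) to each ground set, every one of the above sets restricted to a single ground set is described by a threshold condition of the form ``$< u$'' or ``$\ge u$'' (or ``$> u$'', ``$\le u$''), each of which is a prefix or a suffix of the ordered ground set — hence an interval. This is true for the $X$-part and the $Y$-part of all four types. Finally, for the ``shared endpoint'' claim: fix two ground sets $S_1 \ne S_2$ and consider all sets of $\Ff^*$ meeting both. Inspecting the four types, a set meets two ground sets only in the following patterns: an $A$-set of the form $\{x_{i,v}^\ell : v<u\} \cup \{x_{i,v}^{\ell+1} : v\ge u\}$ always contains the largest element of $X_i^\ell$ (namely $x_{i,v_{\max}}^\ell$ since $v_{\max}<u$ fails only when $u$ is minimal, a case already excluded) and the smallest element of $X_i^{\ell+1}$; a $B$-set behaves identically with the largest element of $Y_{ij}^\ell$ and the smallest of $Y_{ij}^{\ell+1}$; a $C_{ij,u}$-set always contains $y_{ij,w_{\min}z}^1$ where $w_{\min}z$ is the lexicographically smallest edge, i.e. the smallest element of $Y_{ij}^1$ (since every $w\ge u$ includes $w=w_{\min}$ only if $u$ is small — here one must be a little more careful and use that $C$-sets meeting $X_i^{2d-1}$ nontrivially are exactly those with the prefix of $X_i^{2d-1}$, so the shared element in $Y_{ij}^1$ ranges over a suffix, and one should instead phrase it as: the $C$-sets meeting both $X_i^{2d-1}$ and $Y_{ij}^1$ all contain... ) — so for each ordered pair of ground sets the family of sets meeting both shares either the minimum or the maximum of the first one.

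The main obstacle I anticipate is the $C$/$C'$ case in the ``shared endpoint'' claim: unlike $A$ and $B$, the sets $C_{ij,u}$ do not obviously all pass through a common element of $Y_{ij}^1$, because as $u$ ranges over $V_i$ the $Y$-part $\{y_{ij,wz}^1 : w\ge u\}$ is a suffix that shrinks, and the $X$-part is a prefix that grows. The resolution is to observe that the Observation only requires a common element in \emph{one} of the two ground sets (``all intersect in the smallest element of $S_1$ or all in the largest element of $S_1$''), and to choose $S_1 = X_i^{2d-1}$: every $C_{ij,u}$ that actually meets $Y_{ij}^1$ must have $u$ small enough that $\{x_{i,v}^{2d-1} : v<u\}$ is a (possibly empty) prefix of $X_i^{2d-1}$, so all such sets share the \emph{smallest} element of $X_i^{2d-1}$ whenever it is nonempty — and the degenerate case where the $X$-part is empty is handled by noting such a $C$-set meets only the single ground set $Y_{ij}^1$. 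The symmetric argument with suffixes handles $C'$, and the $D^*$-sets meet only the pair $Y_{ij}^3, Y_{ij}^5$, where the $Y_{ij}^3$-part is a prefix and hence all such sets share the smallest element of $Y_{ij}^3$ when that part is nonempty. Assembling these case checks gives the Observation.
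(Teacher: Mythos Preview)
The paper states this as an observation without proof, so your case-by-case verification over the set types $A,B,C,C',D^*$ is exactly the natural (and only) way to check it; there is nothing to compare against. The plan is sound, but two points in your execution are off.

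First, in the $A$-case you write that $A_{i,u}^\ell$ ``always contains the largest element of $X_i^\ell$''. This is backwards: the $X_i^\ell$-part is $\{x_{i,v}^\ell : v < u\}$, a \emph{prefix}, so when nonempty it contains the \emph{smallest} element of $X_i^\ell$, never the largest (indeed $v_{\max} < u$ is impossible since $u \in V_i$). Symmetrically the $X_i^{\ell+1}$-part is a suffix and contains the largest element of $X_i^{\ell+1}$.

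Second, your ``resolution'' for the $C$-case misreads the Observation. The claim is stated for \emph{every} pair $S_1 \ne S_2$, so you cannot ``choose $S_1 = X_i^{2d-1}$'' and be done; it must also hold with $S_1 = Y_{ij}^1$. Fortunately no trick is needed: the $Y_{ij}^1$-part of $C_{ij,u}$ is $\{y_{ij,wz}^1 : w \ge u\}$, a suffix in the lexicographic order, so whenever nonempty it contains the \emph{largest} element of $Y_{ij}^1$. Hence with $S_1 = Y_{ij}^1$ all such $C$-sets share the largest element, and with $S_1 = X_i^{2d-1}$ they share the smallest, as you correctly argue. The clean unifying statement you are reaching for is simply: every set of $\Ff^*$ restricted to any ground set is a prefix or a suffix, and for any fixed ordered pair $(S_1,S_2)$ of ground sets the (unique) type of set meeting both has its $S_1$-restriction uniformly a prefix (hence all contain $\min S_1$) or uniformly a suffix (hence all contain $\max S_1$). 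Once stated this way all three parts of the Observation are immediate from the defining formulas, and the $C,C',D^*$ cases require no more care than $A,B$.
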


With this observation at hand, consider any elements $a,b,c \in X$. We do a case distinction over the number of different ground sets that $a,b,c$ are contained in.

(1) If $a,b,c$ come from the same ground set $S$, then they are ordered in $S$, say $a<b<c$. Since each set of $\Ff^*$ forms an interval in $S$, there is no set of $\Ff^*$ containing $a$ and $c$ but not $b$. 

(2) If $a$ and $b$ come from the same ground set $S_1$, say with $a<b$, and $c$ comes from a different ground set $S_2$, then we consider the last part of Observation~\ref{obs:vctwo}. If all sets of $\Ff^*$ containing elements of $S_1$ and $S_2$ contain the smallest element of $S_1$, then since these sets form an interval restricted to $S_1$, there is no set of $\Ff^*$ containing $b$ and $c$ but not $a$. We argue similarly if all sets of $\Ff^*$ containing elements of $S_1$ and $S_2$ contain the largest element of $S_1$. 

(3) If $a,b,c$ all come from different ground sets, then no set in $\Ff^*$ contains all three elements, since any set of $\Ff^*$ intersects at most two ground sets. 

In all cases we showed that one of the patterns $110$, $101$, $011$, and $111$ is missing for any elements $a,b,c \in X$. This finishes the proof of $VC(\Ff) \le 2$.

\subsection{Dual VC-dimension $2$}
It is easy to see that in general the dual VC-dimension of $\Ff$ is at least 2, e.g., the sets $A_{i,1}^1, A_{i,2}^1$ are shattered by the elements $x_{i,2}^2$ (pattern $11$), $x_{i,1}^2$ (pattern $10$), $x_{i,1}^1$ (pattern $01$), and any element of the form $y_{ij,uv}^\ell$ (pattern $00$).

To show that the dual VC-dimension of $\Ff$ is at most 2, we first reduce to the set system~$\Ff^*$ like in the previous section.
Consider any sets $M_1,M_2,M_3 \in \Cc$ and assume for the sake of contradiction that they realize all of the patterns $110$, $101$, $011$, and $111$ in $\Ff$ but the corresponding sets $M_1^*,M_2^*,M_3^*$ do not realize all patterns $110$, $101$, $011$, $111$ in $\Ff^*$. 
Without loss of generality, assume that $M_1$ is of the form $D_{ij,uv}$ and its element $y_{ji,vu}^4$ is also contained in $M_2$, so that $y_{ji,vu}^4$ induces one of the patterns $110$ or $111$. This yields that $M_2$ is of the form $B_{ji,wz}^\ell$ for appropriate $\ell \in [5], wz \in E_{ji}$. However, any such set has only one element in common with $D_{ij,uv}$, namely $y_{ji,vu}^4$. Thus, {one of the patterns $110$, $111$ is missing, which is a contradiction.} 
Hence, if we show that $\Ff^*$ contains no three sets realizing all patterns $110$, $101$, $011$, and $111$, then no three sets of $\Ff$ are shattered.

Consider any sets $M_1^*,M_2^*,M_3^*$ of $\Ff^*$ and assume for the sake of contradiction that they realize all of the patterns 110, 101, 011, and 111. 
Restricted to any ground set $S$ the sets $M_1^*,M_2^*,M_3^*$ form intervals, and thus $S$ cannot induce all four patterns $110$, $101$, $011$, and $111$ on $M_1^*,M_2^*,M_3^*$ (as can be checked easily and follows from the proof of the well-known fact that intervals have dual VC-dimension $2$). 

Hence, without loss of generality there is a ground set $S_1$ with an element inducing the pattern $111$ and another ground set $S_2$ with an element inducing the pattern $110$ on $M_1^*,M_2^*,M_3^*$. Note that $M_1^*$ and $M_2^*$ are contained in $S_1 \cup S_2$, since every set of $\Ff^*$ intersects at most two ground sets.
By Observation~\ref{obs:vctwo}, since $M_1^*$ and $M_2^*$ intersect both $S_1$ and $S_2$, they both contain the smallest or largest element $e_1$ of $S_1$ and the smallest or largest element $e_2$ of $S_2$. In particular, restricted to $S_1$ we have without loss of generality $M_1^* \subseteq M_2^*$.
Now, if $M_3^*$ does not intersect $S_2$, then the pattern 101 is missing, since only elements of $S_1$ can be contained in both $M_1^*$ and $M_3^*$, but any such element is also contained in $M_2^*$. 
Otherwise, $M_3^*$ also contains $e_1$ and $e_2$, so that restricted to $S_1$ we have a linear ordering $M_{\pi(1)}^* \subseteq M_{\pi(2)}^* \subseteq M_{\pi(3)}^*$ and restricted to $S_2$ we have a linear ordering $M_{\sigma(1)}^* \subseteq M_{\sigma(2)}^* \subseteq M_{\sigma(3)}^*$ (for permutations $\pi, \sigma$). However, two linear orderings can only induce two of the patterns $110$, $101$, and $011$. This contradicts $M_1^*,M_2^*,M_3^*$ realizing all patterns $110$, $101$, $011$, and $111$, and finishes the proof of $VC(\Ff^T) \le 2$.


\section{Efficiently solvable classes of Hitting Set}
\label{sec:simple}
In this section, we consider efficiently solvable special cases of Hitting Set.
The following result can be seen a warmup for a similar but more involved argument in \textsection\,\ref{sec:35}.

 \begin{theorem}\label{thm:vc1}
Hitting Set is polynomial-time solvable on set systems of VC-dimension $1$ and on set systems of dual VC-dimension $1$.
\end{theorem}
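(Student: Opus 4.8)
The plan is to handle the two cases separately, starting with the (easier) dual case. Suppose $VC(\Ff^T) = 1$. Then no two sets $R_1, R_2 \in \Cc$ are shattered by the elements, which means there do not exist elements $a, b, c$ (here "elements" of the dual are the original elements of $X$) realizing all of the patterns $10, 01, 11$ on $(R_1, R_2)$ — i.e.\ we cannot simultaneously have $R_1 \setminus R_2 \neq \emptyset$, $R_2 \setminus R_1 \neq \emptyset$, and $R_1 \cap R_2 \neq \emptyset$. In other words, any two sets in $\Cc$ are either disjoint or nested ($R_1 \subseteq R_2$ or $R_2 \subseteq R_1$). This is a laminar family. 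For a laminar family, Hitting Set is trivial: the inclusion-minimal sets of $\Cc$ are pairwise disjoint, and hitting each of them is necessary and sufficient; picking one element from each minimal set gives an optimum hitting set, whose size equals the number of minimal sets. This is clearly polynomial-time computable.

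Now suppose $VC(\Ff) = 1$, i.e.\ no two elements $a, b \in X$ are shattered: there is no pair $a,b$ realizing all of $01, 10, 11$. I would first reduce to a clean combinatorial structure. Build an auxiliary graph (or rather, observe a forest-like structure): say $a \sim b$ if some set $R \in \Cc$ contains both. The VC-dimension $1$ condition says that whenever two elements $a,b$ co-occur in some set, they never occur "separated" in the sense above — concretely, for any set $R$ and element $a \in R$, the family $\{R' \in \Cc : a \in R'\}$ restricted to $R$ is linearly ordered by inclusion in a suitable sense. The standard fact here is that a set system of VC-dimension $1$ can be embedded into a tree: there is a tree (or forest) $T$ on vertex set $X$ (possibly after adding Steiner-like refinements, or working with a tree whose nodes are elements) such that every set $R \in \Cc$ is the vertex set of a subtree (a connected subgraph) of $T$. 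The cleanest route is to invoke/reprove this representation: VC-dimension $\leq 1$ implies the sets form a family of subtrees of a tree. Once we have that, Hitting Set on subtrees of a tree is exactly the classical "covering subtrees by vertices" problem, solvable greedily in polynomial time: root $T$, process sets by deepest-lowest-point first, and whenever a set is not yet hit, add its highest vertex (the one closest to the root among its vertices) to the solution; a standard exchange argument shows this is optimal.

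The main obstacle is establishing the tree representation cleanly, since VC-dimension $1$ does not literally force a tree on $X$ alone — one must be slightly careful (duplicate elements, empty set, sets that are singletons, and the fact that a "tree of elements" may need the elements arranged so that each set induces a path or subtree). I would argue it directly: fix a set system with $VC(\Ff) \le 1$; after discarding the empty set and identifying elements that lie in exactly the same sets, show that the relation "$a$ and $b$ belong to a common set, and every set containing $a$ either contains $b$ or is contained in ..." yields an acyclic structure, or alternatively use the known characterization (e.g.\ from the same line of work \cite{DBLP:journals/eccc/AlonMY14, MoranSWY15}, or a direct induction on $|X|$) that any maximal set $R \in \Cc$ can be "peeled": pick a vertex $v \in R$ that is extreme (contained in a $\subseteq$-minimal trace), branch on whether $v$ is in the solution, and if not, delete $v$ and recurse, keeping VC-dimension $\le 1$. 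Formalizing that peeling so it runs in polynomial time (rather than branching exponentially) is where the subtree/tree representation really earns its keep, so I expect most of the write-up effort to go there; the greedy subtree-cover algorithm and its correctness proof are then routine.
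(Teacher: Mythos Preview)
Your dual-VC-dimension argument has a gap: shattering two sets $R_1, R_2$ in the dual requires all \emph{four} patterns $00, 01, 10, 11$, not just the three you list. Hence $VC(\Ff^T)\le 1$ only says that for every $R_1, R_2 \in \Cc$ at least one of the four is missing, i.e.\ $R_1 \subseteq R_2$, $R_2 \subseteq R_1$, $R_1 \cap R_2 = \emptyset$, \emph{or} $R_1 \cup R_2 = X$. The family need not be laminar: take $X=\{a,b,c\}$ and $\Cc=\{\{a,b\},\{b,c\},\{a,c\}\}$; every pair of sets covers $X$, so pattern $00$ is always missing and $VC(\Ff^T)=1$, yet no two sets are nested or disjoint. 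The paper handles the extra case directly: after discarding containments, either all remaining sets are pairwise disjoint (immediate), or some $A,B$ satisfy $A\cup B=X$, and then every other $C$ must contain both $A\setminus B$ and $B\setminus A$, forcing the optimum to be at most $2$.

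The same example breaks your primal plan. With $\Cc=\{\{a,b\},\{b,c\},\{a,c\}\}$ the (primal) VC-dimension is also $1$, but these three sets cannot all be subtrees of any tree on $\{a,b,c\}$: the only tree on three vertices is a path, and one of the three pairs is disconnected there. Adding Steiner vertices, as you suggest, changes the Hitting Set instance---a Steiner center would hit all three sets, giving optimum $1$ instead of the true optimum $2$---so the reduction to Tree-Like Hitting Set does not go through. The paper takes a different, elementary route: either some pair $\{x,y\}$ already hits every set (then the optimum is at most $2$ and found by brute force), or pattern $00$ is realized on every pair; in that case any two elements co-occurring in a set are comparable by domination (since $00$ and $11$ are present and the pair is not shattered, one of $01,10$ is missing), and one repeatedly deletes dominated elements until only singleton sets remain.
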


\begin{proof}
Let ${\cal F} = (X,\Cc)$ be a set system of VC-dimension $1$.  If every set in $\Cc$ has non-empty intersection with some $\{x,y\}\subseteq X$ then $\{x,y\}$ is a hitting set of size $2$, and the minimal hitting set can be found by a brute-force search over all subsets of $X$ of size $1$ or $2$.  

Assume therefore that there is no pair $\{x,y\}\subseteq X$ which hits every set in $\Cc$.
Let $x,y\in X$. We say that $x$ \emph{dominates} $y$ if every set in $\Cc$ which contains $y$ also contains $x$. Note that if $x$ dominates $y$, then removing $y$ from all sets in $\Cc$ does not affect the size of the minimum hitting set.
Let $\{x,y\}$ be a two-element set which is contained in some set $A \in \Cc$. 
We claim that $x$ dominates $y$ or $y$ dominates $x$. Indeed, $(x,y)$ realizes the patterns $00$ (by the first observation that no pair $\{x,y\}$ hits every set in $\Cc$) and $11$ (since $\{x,y\} \subset A$). Since $\{x,y\}$ is not shattered, one of $01$ and $10$ must be missing -- implying that one of $x$ or $y$ dominates the other. We proceed by repeatedly removing dominated elements, until we are left with singleton sets which immediately yields the minimum hitting set.

Now consider the case of dual VC-dimension 1. This condition implies that for every pair of sets $A,B\in \Cc$, at least one of the following holds: $A\subseteq B$, $B\subseteq A$, $A\cap B=\emptyset$, or  $A\cup B = X$.

If there exist $A,B\in \Cc$ such that $A \subseteq B$, then we can consider the modified set system in which $B$ is removed, without affecting the size of a minimal hitting set.  Thus, we may assume that no set in $\Cc$ contains another set of $\Cc$. If the sets in $\Cc$ are all pairwise disjoint, then the minimum hitting set contains an arbitrary element from each set, and can easily be found. Thus, we can assume that there exist two sets $A, B \in \Cc$ such that $A \cup B = X$. Any other set $C \in \Cc$ intersects both $A$ and $B$ (otherwise it would be contained in one of them). 
From this we conclude that every $C\in \Cc\setminus\{A,B\}$ satisfies $C\cup A = X$,
and $C\cup B = X$, or equivalently $C$ must contain $B \setminus A$ and $A \setminus B$. It follows that the size of the minimum hitting set is at most $2$, and thus can be computed in polynomial time. \qedhere

\end{proof}

The Sauer-Perles-Shelah Lemma implies that set systems of VC-dimension $1$ are $(k,k+1)$-systems for every $k$, and in particular they are $(3,4)$-systems. Thus, a natural question is whether Hitting Set is polynomial-time solvable for every $(3,4)$-system. We next show that the answer is yes, even for the more general case of $(3,5)$-systems, thus extending Theorem~\ref{thm:vc1}.

\subsection{$(3,5)$-systems}
\label{sec:35}
In this subsection we prove that Hitting Set on $(3,5)$-systems is solvable in polynomial time. Before presenting the algorithm, we briefly observe that the class of $(3,5)$-systems is a proper generalization of Edge Cover instances (i.e.\ where every element occurs in exactly two sets). More generally, an Edge Cover instance is a $(k, k + \lfloor k/2 \rfloor+1)$-system for any $k \geq 1$. This is because the incidence matrix of an Edge Cover instance can have at most $2k$ one-entries in any $k$ columns, and every collection of $k + \lfloor k/2 \rfloor + 2$ distinct $k$-vectors has at least $2k+1$ one-entries. To see that Edge Cover instances are a proper subset of $(3,5)$-systems, observe that in a $(k, k + \lfloor k/2 \rfloor + 1)$-system, an element can occur in an arbitrary number of sets.

\begin{duplicate}[restated]
Hitting set on $(3,5)$-systems is in P.
\end{duplicate}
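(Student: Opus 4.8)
The plan is to reduce the problem to one solvable by matching-type techniques, by first exploiting the $(3,5)$ condition to understand how elements overlap across sets. The key structural observation I would aim for is that the $(3,5)$ constraint severely limits the "private versus shared" pattern of triples of elements: for any three elements $a,b,c$, the projection $PR_{\Ff}(\{a,b,c\})$ misses at least three of the eight boolean patterns on $\{a,b,c\}$. In particular, as in the VC-dimension-$1$ warmup, if the all-zero pattern $000$ is realized (i.e.\ no three elements hit everything — the interesting case, after brute-forcing over small hitting sets), then only four of the remaining seven patterns can appear, which forces strong domination-like relations. I would first dispose of the easy case: if some set of size at most some constant hits all of $\Cc$, find the optimum by brute force; otherwise, every triple realizes $000$, and I proceed with the structural analysis.

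The main work is a case analysis of which four (out of the seven nonzero) patterns a triple $a,b,c$ can realize, and showing that in each case we can either (i) delete a \emph{dominated} element (one $y$ such that every set containing $y$ also contains some fixed $x$), which does not change the optimum, or (ii) conclude that the element appears in very few sets. Iterating the domination-removal, I would aim to reach a set system where every element belongs to at most two sets — equivalently, an Edge Cover instance (possibly with some sets of size $1$, handled by forced picks) — and then invoke the known polynomial-time algorithm for Edge Cover via maximum matching. The bridge I expect to need is a lemma roughly of the form: in a $(3,5)$-system with no two-element hitting set and no dominated elements, every element lies in at most two sets. This should follow by taking an element $x$ in three sets $A,B,C$, choosing witnesses for various patterns, and deriving a contradiction with the $(3,5)$ bound — here the asymptotic statement "Edge Cover instances are exactly the $(k,k+\lfloor k/2\rfloor+1)$-systems with each element in $\le 2$ sets," mentioned before the theorem, is the right target shape.

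The hard part, I expect, will be the combinatorial case analysis establishing the domination lemma: $(3,5)$ is only one unit below the trivial $(3,6)$ bound, which is already NP-hard (Theorem~\ref{thm3}), so the argument must use the single forbidden pattern very tightly, and there will be several sub-cases depending on which of the seven patterns survive and how pairs of elements interact across sets. I anticipate needing an auxiliary notion — perhaps a directed "domination graph" on $X$ — and an argument that removing a source or sink of this graph is safe, plus care that the removal process terminates in polynomially many rounds and that the final instance is genuinely an Edge-Cover-type instance. A secondary subtlety is that after deleting dominated elements the $(3,5)$ property must be checked to be preserved (it is, since deleting elements only shrinks projections), and that sets of size $1$ created along the way are handled by forced inclusion without breaking the $(3,5)$ invariant. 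Once the reduction to Edge Cover is in hand, the polynomial running time is immediate from the classical matching-based algorithm~\cite{gareyjohnson}.
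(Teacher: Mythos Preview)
Your overall plan---preprocess, then reduce to Edge Cover---matches the paper's, but the bridge lemma you propose is false. Consider the set system on $\{x,a_1,a_2,a_3\}$ with sets $A_i=\{x,a_i\}$ for $i=1,2,3$ and $W=\{a_1,a_2,a_3\}$, together with a disjoint copy of the same structure on fresh elements (the copy guarantees that every triple realizes the pattern $000$ and that no three elements form a hitting set). A direct check shows this is a $(3,5)$-system with no dominated elements, yet $x$ lies in three sets. So ``no small hitting set and no dominated elements'' does \emph{not} force every element into at most two sets, and your proposed case analysis cannot terminate in option~(ii) on such an instance; domination removal alone does not reach an Edge Cover instance.

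The paper handles precisely this obstruction by a different, non-obvious step: after ruling out small hitting sets, dominated elements, set containments, singletons, and disconnectedness, it proves that any element $x$ appearing in at least three sets belongs to \emph{every} minimum hitting set, so one may greedily add $x$ to the solution and recurse (this is preprocessing step~5, justified by Lemma~\ref{claim1}). The proof of that forcing lemma is where the real combinatorics lives: one first shows that after the earlier preprocessing any two sets intersect in at most one element (Lemma~\ref{claim2}, established via an auxiliary ``$B_4$-system'' argument in Lemmas~\ref{lem:defb4} and~\ref{lem:contb4}), and then uses this to pin down the local structure around $x$ to exactly the wheel configuration above---each $A_i=\{x,a_i\}$ and a single set $W\supseteq\{a_1,\ldots,a_t\}$, with each $a_i$ lying only in $A_i$ and $W$---from which the forcing is immediate. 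Your proposal is missing both this forcing step and the intersection bound that powers it.
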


Let ${\cal F} = (X, \Rr)$ be a $(3,5)$-system. We present a polynomial-time algorithm which outputs a minimum hitting set for $\Ff$.
First check whether $\emptyset\in\Rr$; if this is the case, then  report ``no solution''. Otherwise perform the following preprocessing steps repeatedly, until none of the steps can be performed.
\begin{enumerate}
\setcounter{enumi}{-1}
\item If $\Ff$ is not connected, i.e.\ there are set systems $(X_1,\Cc_1)$,
$(X_2,\Cc_2)$ with disjoint $X_1,X_2$ and $\Ff=(X_1 \cup X_2, \Cc_1 \cup \Cc_2)$, then
recursively solve $(X_1,\Cc_1)$ and $(X_2,\Cc_2)$ and return the union of the
solutions.
\item If $\{x,y,z\} \subseteq X$, and the pattern $000$ is not realized on $(x,y,z)$, then a minimum hitting set is of size at most $3$, and we find it by exhaustive search over all subsets of size at most 3. 
\item If $\{x,y\} \subseteq X$, and the pattern $01$ is not realized on $(x,y)$, then remove $y$ from $X$, as $x$ dominates $y$ (whenever $y$ occurs, $x$ also occurs).  
\item If $A, B \in \Rr$ such that $A \subseteq B$, then remove $B$ from $\Rr$, as whenever we hit $A$, we also hit $B$.
\item If there is a singleton set $\{x\} \in \Rr$, then add $x$ to the solution, remove $x$ from $X$ and remove every set containing $x$ from $\Rr$.
\item (only if steps $0,\ldots,4$ cannot be applied)
 If $A,B,C \in \Rr$, and there is an element $x \in (A \cap B \cap C)$, then add $x$ to the solution, remove $x$ from $X$ and remove every set containing $x$ from $\Rr$.
\end{enumerate}

Observe that after every preprocessing step the resulting set system is still a $(3,5)$-system. Moreover, after the preprocessing, every element of $X$ is contained in exactly two sets of $\Rr$ (otherwise rule 2,4, or 5 is applicable). In other words, after preprocessing, $(X,\Rr)$ is an instance of Edge Cover - such an instance can be solved in polynomial time by computing a maximum matching, and then augmenting with additional edges to cover the unmatched vertices~\cite{gareyjohnson}. The total asymptotic running time (including the time of the preprocessing) is dominated by the time needed to find a maximum matching in a graph with $|\Rr|$ vertices and $|X|$ edges.

The correctness of the algorithm hinges on the validity of the preprocessing steps. Note that only step 5 is not trivially valid. Theorem~\ref{thm2} thus follows from the following claim.
\begin{lemma}\label{claim1}
If preprocessing steps $0,\ldots,4$ cannot be applied, and if there exists an element $x$ contained in at least three sets of $\Rr$, then $x$ is part of any minimum hitting set.
\end{lemma}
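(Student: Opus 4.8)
The plan is to fix an element $x$ that occurs in at least three sets $A, B, C \in \Rr$, fix an arbitrary minimum hitting set $T$, and show $x \in T$ by a swapping argument. Suppose for contradiction that $x \notin T$. Since $T$ hits $A$, $B$, and $C$, there are elements $a \in T \cap A$, $b \in T \cap B$, $c \in T \cap C$ (not necessarily distinct, and each distinct from $x$). The idea is to replace this small set of ``witnesses'' inside $T$ by $x$ together with a suitably small replacement set, contradicting minimality of $T$. Concretely, I would aim to show that $x$ can substitute for at least as many elements as it forces us to re-cover: removing $\{a,b,c\}$ from $T$ and adding $x$ leaves only the sets that were hit \emph{exclusively} by $a$, by $b$, or by $c$ uncovered, and the goal is to re-cover all of those using at most $|\{a,b,c\}| - 1$ new elements (or to directly exhibit a smaller hitting set).

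The first structural observation to extract is that, after preprocessing, every element lies in exactly two sets (as argued in the paragraph before the lemma) \emph{except possibly $x$ and other high-degree elements} — but in fact we only need the two-set property for the witnesses $a,b,c$ once we handle them, so I would first argue we may assume $a,b,c$ are pairwise distinct (if two coincide, the argument only gets easier, since that element already covers two of $A,B,C$). Next I would use the absence of step 5's precondition together with the $(3,5)$-property: because step~1 cannot be applied, the pattern $000$ \emph{is} realized on every triple, so no triple of elements ``covers everything''; because step~2 cannot be applied there are no domination relations; because step~3 cannot be applied $\Rr$ is an antichain; because step~5 itself is being invoked, $x$ is the common element of $A,B,C$. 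The key combinatorial step is to bound, using the $(3,5)$-condition applied to the columns $a, b, c$ (or to $x$ and two of them), the number of sets that contain exactly one of $a,b,c$; a $(3,5)$-system allows at most $5$ distinct patterns on any three elements, and the patterns $000$, $100$, $010$, $001$, and one more are the ``dangerous'' ones — so at most one of $110, 101, 011, 111$ can occur, which tightly restricts how the sets through $a$, $b$, $c$ interact and lets me pin down which sets become uncovered after the swap.

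The main obstacle, and where the real work lies, is the case analysis on how the sets hit by $a$, $b$, $c$ relate to the sets hit by $x$ and to each other — i.e., turning the pattern restriction from the $(3,5)$-condition into a concrete bound on the ``replacement cost''. I expect the worst case is when $a$, $b$, $c$ are all distinct and each is needed to cover some set not containing $x$; here I would exploit that each of $a, b, c$ lies in exactly two sets (one of which is among $A, B, C$), so each contributes at most one ``other'' set to re-cover, and then show that $x$ plus one cleverly chosen element from the at most two remaining uncovered sets suffices — using the $(3,5)$ pattern restriction to show those at most two uncovered sets must share an element, or that one of them is already hit. If that shared-element claim fails, I would fall back on the fact that steps 1 and 5 being inapplicable forbids certain configurations of three sets with a common element, deriving the needed overlap indirectly. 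Assembling these cases into a clean $|T \setminus \{a,b,c\}| + |\{x, \text{replacement}\}| < |T|$ inequality completes the contradiction.
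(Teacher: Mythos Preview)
Your plan has a real circularity problem. You write that ``after preprocessing, every element lies in exactly two sets \ldots\ but in fact we only need the two-set property for the witnesses $a,b,c$'', and your worst-case analysis explicitly leans on ``each of $a,b,c$ lies in exactly two sets (one of which is among $A,B,C$)''. But the statement ``every element lies in exactly two sets'' is the conclusion drawn \emph{after} step~5 has been repeatedly applied; Lemma~\ref{claim1} is precisely what justifies step~5. At the moment you invoke the lemma, nothing prevents $a$, $b$, or $c$ from itself being a high-degree element contained in many sets, so removing them from $T$ may uncover an unbounded number of sets, and your replacement-cost bookkeeping collapses. Your fallback (``use the $(3,5)$ restriction to force the uncovered sets to share an element'') is not an argument yet, and it would have to cope with arbitrarily many uncovered sets, not two or three.

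The paper avoids this by first proving a structural auxiliary fact (Lemma~\ref{claim2}): once steps $0$--$4$ are inapplicable, any two sets intersect in at most one element. With that in hand, it analyses \emph{all} $t\ge 3$ sets $A_1,\ldots,A_t$ through $x$, shows via $(3,5)$-pattern arguments that each $A_i$ has size exactly~$2$, that the ``other'' elements $a_1,\ldots,a_t$ all lie in a single common set $W$, and that $A_i$ and $W$ are the \emph{only} sets containing $a_i$. Only then does the swap $T\mapsto (T\setminus\{a_2,\ldots,a_t\})\cup\{x\}$ go through cleanly. So the missing idea in your plan is the intersection bound $|A\cap B|\le 1$ (or something playing its role), which lets you pin down the local structure around $x$ without assuming the global degree-two property you are trying to establish.
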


\begin{proof}

We make use of the following claim that we prove later.
\begin{lemma}\label{claim2}
If preprocessing steps $0,\ldots,4$ cannot be applied, then for any two sets $A,B \in \Rr$, we have $|A \cap B| \leq 1$.
\end{lemma}
Suppose that there is an element $x \in X$ contained in $t$ sets of $\Rr$, where $t \geq 3$, and let $A_1, \ldots, A_t$ denote the sets containing $x$. Each of these sets must also contain some element other than $x$ (by preprocessing step 4), so let $a_1 \in A_1 \setminus \{x\}, \ldots, a_t \in A_t \setminus \{x\}$. Observe that since $t\geq 3$, Lemma~\ref{claim2} implies that for all $i\neq j$: $A_i\cap A_j=\{x\}$ and therefore, $a_1, \ldots, a_t$ are distinct.

The proof proceeds by showing that every hitting set that does not contain $x$ must contain $a_1,\ldots,a_t$, and that replacing $a_2,\ldots, a_t$ by $x$ preserves the property of being a hitting set.

For every $a_i$ there exists a set $A'_i \in \Rr$ such that $a_i \in A'_i$ and $x \not\in A'_i$, as otherwise $a_i$ would have been deleted in step 2 of the preprocessing, as it is dominated by $x$. 

We show that $A'_i=A'_j$ for all $i,j\leq t$.
Suppose first, towards contradiction, that there exist two indices $i$ and $j$, such that $a_i \notin A'_j$. Let $k$ be an index ($1 \leq k \leq t$) different from $i$ and $j$. In this case, the triple $(x,a_i,a_j)$ realizes the patterns $000$ (by preprocessing step 1), {$100$} (from $A_k$), {$110$} (from $A_i$), {$101$} (from $A_j$), $001$ (from $A'_j$), and either $011$ or $010$ (from $A'_i$), in both cases contradicting the hypothesis that ${\cal F}$ is a $(3,5)$-system. We conclude that for all indices $i$ and $j$, $a_i \in A'_j$.
Thus, we have $\{a_i, a_j\} \subseteq A'_i \cap A'_j$ for all $i,j$. From Lemma~\ref{claim2} we conclude that $A'_i = A'_j$. Let us denote $W = A'_1=\cdots=A'_t$.

Since the above reasoning holds for any element in $A_i \setminus \{x\}$, we even have $W \supset A_i \setminus \{x\}$ for all $i \leq t$. Observe that $|A_i| = 2$, for all $i \leq t$, as otherwise $W$ would intersect $A_i$ in more than one element, contradicting Lemma~\ref{claim2}. See Figure~\ref{fig:cart} for an illustration.

Suppose that there exists a set $Q \in \Rr$, such that $a_i \in Q$, and $Q \neq A_i$, and $Q \neq W$, and let $j$, $k$ be two indices different from $i$. 
Note that since $\lvert Q\cap W\rvert,\lvert Q\cap A_i\rvert\leq 1$, and $a_i\in Q\cap A_i$, and $a_i\in Q\cap W$, it follows that $a_j\notin Q$ and $x\notin Q$. Thus the triple $(x,a_i,a_j)$ realizes the patterns $000$ (by preprocessing step 1), $100$ (from $A_k$), $110$ (from $A_i$), $101$ (from $A_j$), $010$ (from $Q$), and $011$ (from $W$), contradicting the hypothesis that ${\cal F}$ is a $(3,5)$-system. Therefore, $A_i$ and $W$ are the only sets in $\Rr$ containing $a_i$.

Let $H$ be a hitting set that does not contain $x$. Since each of the sets $A_1, \ldots, A_t$ is of size $2$, in order to hit them we must have $a_1,\ldots,a_t\in H$. However, as $a_i$ (for all $i$) is contained only in $A_i$ and $W$, we can improve the solution by removing $a_2,\ldots,a_t$ and adding $x$. In this way, all the sets containing the removed elements are still hit. This means that $H$ is not a minimum hitting set, and thus preprocessing step $5$ is justified. \end{proof}

\begin{wrapfigure}[9]{l}[0.15\textwidth]{0.25\textwidth}
\begin{center}  
\vspace*{-0.25in}
\hbox{\hspace{0.3in}
\includegraphics[width=0.17\textwidth]{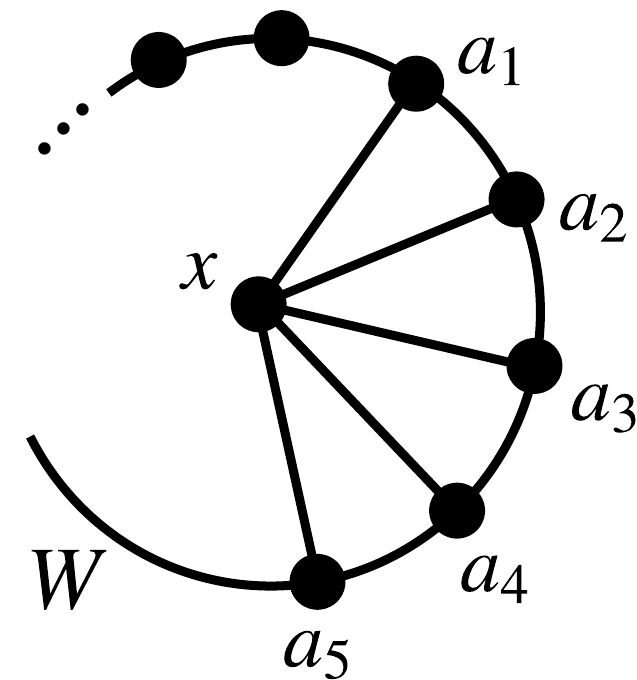}}
\vspace*{-0.25in}
\end{center}
	\protect\caption{Illustration of the proof of Lemma~\ref{claim1}.\label{fig:cart}}
\end{wrapfigure}

It remains to prove Lemma~\ref{claim2}. We proceed via two intermediate claims.


\begin{lemma}  \label{lem:defb4}
 Let $\Ff = (X,\Cc)$ be a $(3,5)$-system such that preprocessing steps $0,\ldots,4$ cannot be applied. Let $Y \subseteq X$ with $|Y| = k\geq 4$. Then the following properties are equivalent. If they are satisfied, we say that $\Ff$ contains a $B_k$-system (induced by $Y$).
  \begin{itemize}
    \item $PR_\Ff(Y)$ contains all subsets of $Y$ of size $k-1$,
    \item $PR_\Ff(Y)$ contains no set $S$ with $0 < |S| \le k-2$.
  \end{itemize}
\end{lemma}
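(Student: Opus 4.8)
The plan is to prove the equivalence of the two stated properties by showing each implies the other, and the heart of the argument will be a counting step against the $(3,5)$-condition together with repeated use of Lemma~\ref{claim2}.

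\medskip
\noindent\textbf{Proof sketch.} First I would establish the easy direction: suppose $PR_\Ff(Y)$ contains all $k$ subsets of $Y$ of size $k-1$, and suppose for contradiction that it also contains some $S$ with $0<|S|\le k-2$. Since $|Y|=k\ge 4$, I would pick three elements $a,b,c\in Y$ whose pattern on the three sets realizing $S$ and two suitably chosen $(k-1)$-subsets is forced to cover too many of the eight patterns. More carefully: the empty set is realized on $(a,b,c)$ (preprocessing step~1 guarantees $000$ on any triple, applied to $a,b,c\in Y\subseteq X$). Each of the $k\ge 4$ sets of size $k-1$ realizes a pattern on $(a,b,c)$ with exactly one or zero zeros among the chosen coordinates; by choosing the three elements to lie inside versus outside the missing coordinate of various $(k-1)$-subsets, I obtain the patterns $110$, $101$, $011$ (take the $(k-1)$-subset omitting $a$, the one omitting $b$, the one omitting $c$ respectively — these give $011$, $101$, $110$). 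Together with $000$ that is already four patterns; the existence of a nonempty $S$ of size $\le k-2$ forces, for an appropriate choice of $a,b,c$ relative to $S$, at least one further pattern among $\{100,010,001\}$, giving six distinct patterns on three elements and contradicting the $(3,5)$-assumption. The only subtlety is arguing that $a,b,c$ can be chosen so that $S$ restricted to $\{a,b,c\}$ is a proper nonempty subset; this uses $0<|S|$ and $|S|\le k-2<k=|Y|$, so $S$ and $Y\setminus S$ are both nonempty, and since $k\ge 4$ we have room to place at least one element of $\{a,b,c\}$ in $S$ and at least one outside, with the third element free — then one verifies the induced pattern is new.

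\medskip
\noindent For the reverse direction, assume $PR_\Ff(Y)$ contains no nonempty set of size $\le k-2$, and suppose some $(k-1)$-subset $Y\setminus\{y_0\}$ is \emph{not} in $PR_\Ff(Y)$; I must derive a contradiction, i.e.\ show all $(k-1)$-subsets are present. Here the idea is that the sets of $\Cc$, restricted to $Y$, can only be $\emptyset$, $Y$ itself, or sets of size $k-1$ (by hypothesis nothing of intermediate positive size $\le k-2$ survives, and size $k$ is all of $Y$). So effectively each $R\in\Cc$ either misses $Y$, contains $Y$, or omits exactly one element of $Y$. For each element $y\in Y$, by preprocessing step~2 (no domination) there is a set omitting $y$ while containing\ldots — more precisely, for two distinct $y,y'\in Y$, since $y$ does not dominate $y'$ there is a set $R$ with $y'\in R$, $y\notin R$; restricted to $Y$ this $R$ must be exactly $Y\setminus\{y\}$ (it omits $y$, and it cannot omit any other element $y''$ since then $R\cap Y$ would have size $k-2\ge 2$, forbidden when $k\ge 4$; wait, size $k-2$ is allowed to be large only if $k-2\ge k-1$, which is false, so indeed $R\cap Y$ has size $\ge k-1$, hence equals $Y\setminus\{y\}$). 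Thus for every $y\in Y$ the subset $Y\setminus\{y\}$ lies in $PR_\Ff(Y)$, which is exactly the first property. This handles $k\ge 4$ cleanly because the gap between $k-2$ and $k-1$ leaves no room for a forbidden-size trace to sneak in.

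\medskip
\noindent\textbf{Main obstacle.} The delicate part is the first direction — the pattern-counting — because I must choose the triple $a,b,c\in Y$ so that the $(k-1)$-subsets and the hypothetical small set $S$ jointly witness six of the eight length-$3$ patterns, and this requires a small case analysis on $|S|\bmod$ the positions of $a,b,c$. I would organize it by: (i) fix $p\in S$, $q\in Y\setminus S$ (both nonempty), (ii) pick $r\in Y\setminus\{p,q\}$ (possible since $k\ge 3$), (iii) check that on $(p,q,r)$ the three $(k-1)$-subsets omitting $p$, $q$, $r$ give three distinct patterns with a single $0$, the empty trace gives $000$, the full set $Y$ gives $111$, and $S$ gives a pattern with $p$-coordinate $1$ and $q$-coordinate $0$, i.e.\ one of $100$ or $101$ — in either case a sixth pattern beyond $\{000,111,011,101,110\}$. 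Counting: $\{000,111,011,101,110\}$ is five patterns, and $S$ contributes a sixth, so $PR_\Ff(\{p,q,r\})$ has size $\ge 6$, contradicting the $(3,5)$-property. I should double check that $111$ is indeed realized — it is, since $Y\in PR_\Ff(Y)$ is not guaranteed, but actually we do not need $Y$ itself; the three $(k-1)$-subsets already give $\{011,101,110\}$, together with $000$ that is four, and two more from $S$ and one further cleverly-chosen $(k-1)$-subset pattern... I would instead note that with $k\ge 4$ there are at least four $(k-1)$-subsets, and a fourth one omitting an element outside $\{p,q,r\}$ restricts to $111$ on $(p,q,r)$, securing the fifth pattern $111$ without relying on $Y\in\Cc$; then $S$ gives the sixth. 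This is the step I expect to require the most care to state precisely.
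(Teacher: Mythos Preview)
Your approach is essentially the paper's, and the reverse direction (second property implies first) is correct and matches the paper verbatim: preprocessing step~2 forces every $01$-pattern on pairs from $Y$, and the absence of traces of size between $1$ and $k-2$ forces the witnessing set to restrict to exactly $Y\setminus\{y\}$.

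There is, however, a genuine slip in your forward direction. You claim that on $(p,q,r)$ the set $S$ gives pattern $100$ or $101$, and that ``in either case'' this is a sixth pattern beyond $\{000,111,011,101,110\}$. But $101$ is already in that list, so if $r\in S$ you do \emph{not} get a sixth pattern and the contradiction fails. The fix is immediate and is exactly what the paper does: since $|Y\setminus S|\ge k-(k-2)=2$, you can (and must) choose \emph{both} $q$ and $r$ in $Y\setminus S$, so that $S$ realizes $100$ on $(p,q,r)$, which is genuinely new. With this choice the six patterns are $000$ (preprocessing step~1), $011,101,110$ (from $Y\setminus\{p\}$, $Y\setminus\{q\}$, $Y\setminus\{r\}$), $111$ (from $Y\setminus\{y\}$ for any $y\notin\{p,q,r\}$, available because $k\ge 4$), and $100$ (from $S$) --- contradicting the $(3,5)$-property.

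One further remark: your plan announces ``repeated use of Lemma~\ref{claim2}'', but you never actually invoke it, and you must not. In the paper Lemma~\ref{claim2} is \emph{derived from} the present lemma (via Lemma~\ref{lem:contb4}), so appealing to it here would be circular.
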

\begin{proof}
  To see that the second property implies the first, observe that all $01$-patterns must be present (by preprocessing step 2), and these patterns can only be realized by having all sets of the form $Y \setminus \{y\}$ for $y \in Y$.
  To see that the first property implies the second, assume for contradiction that there is a set $S \in PR_\Ff(Y)$ with $0<|S| \leq k-2$. Then $S$ realizes pattern $100$ on some $y_1,y_2,y_3$. Since patterns $110$, $101$, $011$, $111$ are realized (by the first property and $k \ge 4$) and $000$ is realized (by preprocessing step 1), we obtain a contradiction.
\end{proof}
\begin{lemma} \label{lem:contb4}
 Let $\Ff = (X,\Cc)$ be a $(3,5)$-system such that preprocessing steps $0,\ldots,4$ cannot be applied. If there are two sets $A,B \in \Cc$ with $|A \cap B| \ge 2$, then there exists a set $Q = \{x,y,z,t\} \subseteq X$ with the following properties:\\ (i)\ \ $Q$ induces a $B_4$-system on $\Ff$, and \\(ii) $Q$ is a hitting set of $\Ff$.
\end{lemma}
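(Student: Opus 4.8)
The plan is to start from two sets $A, B \in \Cc$ with $|A \cap B| \ge 2$, pick two elements $x, y \in A \cap B$, and then build up the four-element set $Q$ by carefully adding two more elements. The pair $(x,y)$ already realizes patterns $11$ (from $A$, or $B$), and $00$ (by preprocessing step~1, since $\Ff$ is connected and step~1 produced no witness). It also realizes $10$ and $01$ (by preprocessing step~2: neither dominates the other, since step~2 is inapplicable). So $\{x,y\}$ is shattered. The goal is to extend $\{x,y\}$ to a shattered-except-for-small-sets four-element set, i.e.\ a $B_4$-system in the sense of Lemma~\ref{lem:defb4}. First I would analyze what the $(3,5)$ constraint forces on the projection $PR_\Ff(\{x,y,z\})$ for a third element $z$. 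Since $000$, $100$, $010$, $110$ are all available among $\{x,y\}$-patterns lifted appropriately, and $|PR_\Ff(\{x,y,z\})| \le 5$, we get strong restrictions; in particular one cannot have both $z$ realizing $001$ together with all of $011$, $101$, $111$.

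**Finding the third and fourth elements.**

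The key is to locate a third element $z$ such that $\{x,y,z\}$ realizes all three size-$2$ patterns $110, 101, 011$ plus $111$ — equivalently, $PR_\Ff(\{x,y,z\})$ contains no nonempty set of size $\le 1$, matching the $B_3$-analogue of Lemma~\ref{lem:defb4}. I would argue such a $z$ exists using the sets available: $A$ and $B$ differ (they must, else step~3 would apply after an $\subseteq$ check — actually they could be incomparable), and since $|A \cap B| \ge 2$ while $A \not\subseteq B$ and $B \not\subseteq A$ (step~3 inapplicable), there is an element in $A \setminus B$ and one in $B \setminus A$. Using these as candidates for the third coordinate, together with the $(3,5)$ bound ruling out the "bad" singleton patterns, should pin down a $z$ (possibly after swapping roles) that gives a $B_3$-configuration. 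Then I would repeat the extension argument one more level: given the $B_3$-system on $\{x,y,z\}$, find a fourth element $t$ so that $\{x,y,z,t\}$ realizes all size-$3$ subsets; the $(3,5)$ constraint applied to every triple inside $\{x,y,z,t\}$ forces, via Lemma~\ref{lem:defb4}, that this is a genuine $B_4$-system once we check no small nonempty trace survives. This establishes property~(i).

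**Property (ii): $Q$ is a hitting set.**

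For the second property I would argue by contradiction: suppose some set $R \in \Cc$ has $R \cap Q = \emptyset$. Then $R$ realizes the pattern $0000$ on $(x,y,z,t)$. But $Q$ induces a $B_4$-system, so $PR_\Ff(Q)$ already contains all four size-$3$ subsets; pick any three coordinates among $x,y,z,t$ — say $x,y,z$ — then on $(x,y,z)$ the traces $110, 101, 011, 111$ are all present (restrictions of the size-$3$ traces of $Q$), and $000$ would be present too via $R$, giving $|PR_\Ff(\{x,y,z\})| \ge 5$, and then any further trace — e.g.\ a $100$ coming from the size-$3$ subset $\{x,t,*\}$ restricted appropriately, or more carefully, the existence of $A$ giving $11$ on $\{x,y\}$ lifted — pushes past $5$. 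I would make this counting precise: the four size-$3$ traces of $Q$ restrict to at least four distinct traces on $\{x,y,z\}$ together with the empty trace from $R$, and one checks this already forces $6$ distinct patterns on these three elements, contradicting $(3,5)$. Hence no such $R$ exists and $Q$ is a hitting set.

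**Main obstacle.** The delicate part is the existence argument for the third (and then fourth) element: one must show that starting from $|A \cap B| \ge 2$ and the inapplicability of steps $0$–$4$, the $(3,5)$ bound actually \emph{forces} the projection to grow into the full $B_4$ pattern rather than stabilizing at some intermediate configuration. This requires a careful case analysis of which patterns a candidate extension element can realize, repeatedly invoking $|PR_\Ff(\cdot)| \le 5$ on triples and the ``no $01$ missing'' and ``no $000$ missing'' consequences of steps~1 and~2, plus Lemma~\ref{claim2}-style reasoning (which here we are in the process of proving, so I must be careful not to use Lemma~\ref{claim2} circularly — only Lemma~\ref{lem:defb4} and preprocessing steps are available). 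Managing this dependency order, and the bookkeeping of patterns across three- and four-element subsets, is where the real work lies.
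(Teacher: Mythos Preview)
Your plan for part~(i) is essentially the paper's approach: pick $x,y\in A\cap B$, take $z\in A\setminus B$ and $t\in B\setminus A$ (these exist since step~3 is inapplicable), and use the $(3,5)$ bound together with the mandatory $000$-pattern (step~1) and the $01$-patterns on every ordered pair (step~2) to force $PR_\Ff(\{z,x,y\})=PR_\Ff(\{t,x,y\})=\{000,011,101,110,111\}$; combining these yields all four size-$3$ subsets of $Q=\{x,y,z,t\}$ and hence the $B_4$-system via Lemma~\ref{lem:defb4}.

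Your argument for part~(ii), however, does not go through. Once $Q$ induces a $B_4$-system, Lemma~\ref{lem:defb4} says $PR_\Ff(Q)$ contains no set of size $1$ or $2$. Consequently every set of $\Cc$, restricted to a triple such as $\{x,y,z\}\subset Q$, produces one of the patterns $000,011,101,110,111$ --- exactly five patterns, not six. Your proposed sixth pattern does not exist: any set containing $x$ meets $Q$ in at least three points, so its trace on $\{x,y,z\}$ has weight $\ge 2$; and the size-$3$ subsets $\{x,y,t\},\{x,z,t\}$ restrict to $110$ and $101$, not $100$. So the existence of some $R\in\Cc$ with $R\cap Q=\emptyset$ is fully consistent with the $(3,5)$ property as long as you look only at elements of $Q$.

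The missing idea is to exploit connectedness (the inapplicability of step~0) to bring in an element \emph{outside} $Q$. If some $D\in\Cc$ misses $Q$, choose $D'\in\Cc$ with $D'\cap D\ne\emptyset$ and $D'\cap Q\ne\emptyset$, and take $s\in D\cap D'$. Since $|D'\cap Q|\ge 3$ by the $B_4$ property, pick $x_1,x_2,x_3\in D'\cap Q$ and a set $E$ realizing $110$ on $(x_1,x_2,x_3)$. Now on the triple $(s,x_2,x_3)$ (if $s\in E$) or $(s,x_1,x_2)$ (if $s\notin E$) one already has $000$, $111$ (from $D'$), $100$ (from $D$), and a fourth pattern from $E$; the step-2 requirements on the remaining pairs then force at least two further patterns, exceeding five. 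Without this bridge element $s$ your count cannot be pushed past five.
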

\begin{proof}
(i) Consider two elements $x,y \in A \cap B$. By preprocessing step 3 there exist $z \in A \setminus B$ and $t \in B \setminus A$.  
On the triple $(z,x,y)$ we realize $000$ (by preprocessing step 1), $111$ (by $A$), $011$ (by $B$). The missing $01$ patterns on $(x,z)$, $(y,z)$, $(x,y)$, $(y,x)$ can be realized with the assumption that $\Ff$ is a $(3,5)$-system, only if the remaining two patterns on $(z,x,y)$ are $101$ and $110$. A similar argument shows that on the triple $(t,x,y)$ the following five patterns are realized: $000$ (by preprocessing step 1), $111$ (by $B$), $011$ (by $A$), and to obtain $01$ on each pair: $101$ and $110$.
Observe that if $(x,y)$ realizes $00$, $01$, or $10$, then the pattern on $z$ and $t$ is uniquely determined. This yields on the tuple $(z,t,x,y)$ the following patterns: $0000$, $1110$, $1101$ (by joining uniquely the patterns that realize $00$, $01$, and $10$ on $(x,y)$). We need to realize $01$ on both $(z,t)$ and $(t,z)$. The only way to achieve this is with the patterns $0111$ and $1011$ on $(z,x,y,t)$. With this we conclude that $PR_\Ff(\{x,y,z,t\})$ contains all possible sets of size $3$, satisfying the first condition of Lemma~\ref{lem:defb4} and hence $\Ff$ contains a $B_4$ system induced by $\{x,y,z,t\}$.

(ii) Suppose for contradiction that $\{x,y,z,t\}$ is not a hitting set of $\Ff$. Pick $D,D'\in\Rr$ such that $D\cap\{x,y,z,t\}=\emptyset$, $D'\cap D\neq\emptyset$, and $D' \cap \{x,y,z,t\} \neq\emptyset$ (such $D,D'$ exist since $\Ff$ is connected and $\{x,y,z,t\}$ is not a hitting set of $\Ff$).
Let $s \in D \cap D'$.
Observe that $|D' \cap \{x,y,z,t\}| \geq 3$ (by Lemma \ref{lem:defb4}). 
Let $x_1,x_2,x_3$ be distinct elements from $\{x,y,z,t\}$
that belong to $D'$. Let $E\in\Rr$ such that $E$ induces the pattern $110$
on $(x_1,x_2,x_3)$. Such an $E$ exists since $\{x,y,z,t\}$ induce a $B_4$-system on $\Ff$.
We consider two cases: (1) when $s \in E$, and (2) when $s \not\in E$.
(1) On the triple $(s,x_2,x_3)$, we have the patterns $000$ (from the preprocessing), $111$ (by $D'$), $100$ (by $D$), $110$ (by $E$), and to get $01$ on $(x_2,x_3)$ and $01$ $(s,x_2)$, we need at least two more patterns on $(s,x_2,x_3)$, contradicting that ${\cal F}$ is a $(3,5)$-system. 
(2) On the triple $(s,x_1,x_2)$ we have the patterns $000$ (from the preprocessing), $111$ (by $D'$), $100$ (by $D$), and $011$ (from $E$).  To realize $01$ and $10$ on $(x_1,x_2)$, we need two more patterns on $(s, x_1,x_2)$, contradicting that ${\cal F}$ is a $(3,5)$-system. We reached a contradiction, proving that such a $D$ cannot exist, and hence $\{x,y,z,t\}$ is a hitting set of $\Ff$. \qedhere

\end{proof}

Suppose that preprocessing steps $0,\dots,4$ cannot be applied to $\Ff$, and there exist sets $A,B \in \Cc$ with $|A \cap B| \ge 2$. Then, from Lemma~\ref{lem:contb4} it follows that there exists a hitting set $\{x,y,z,t\}$ of $\Ff$, such that $\{x,y,z,t\}$ induces a $B_4$-system in $\Ff$. From the definition of hitting set it follows that $PR_{\Ff}(\{x,y,z,t\})$ does not contain the empty set. From Lemma~\ref{lem:defb4} it follows that $PR_{\Ff}(\{x,y,z,t\})$ does not contain a set of size 1. Hence, on the triple $(x,y,z)$ the pattern $000$ cannot be realized, contradicting the assumption that preprocessing step 1 cannot be applied.

This concludes the proof of Lemma~\ref{claim2} and the proof of correctness for the algorithm.\\

\vspace{-0.1in}

The following theorem gives a sharp threshold on the complexity of Hitting Set by showing the NP-hardness of Hitting Set on $(3,6)$-set systems.
\begin{duplicate}[restated]
Hitting set on $(3,6)$-systems is in NP-hard.
\end{duplicate}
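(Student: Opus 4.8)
The plan is to exhibit a simple polynomial-time reduction from a classical NP-hard problem to Hitting Set restricted to $(3,6)$-systems. The most natural candidate is Vertex Cover on graphs of maximum degree $3$ (equivalently, $3$-bounded Vertex Cover), which is well-known to be NP-hard. Given a graph $G=(V,E)$ of maximum degree $3$, one takes the usual set system $\Ff_G = (V, \{\,\{u,v\} : uv \in E\,\})$ whose hitting sets are exactly the vertex covers of $G$. The only thing to verify is that $\Ff_G$ is a $(3,6)$-system, i.e.\ that any three columns of its incidence matrix contain at most $6$ distinct boolean vectors among the rows. Since every set in $\Ff_G$ has size exactly $2$, for any three elements $x,y,z \in V$ the projection $PR_{\Ff_G}(\{x,y,z\})$ can only contain sets of size at most $2$ (and the patterns $110$, $101$, $011$ correspond to the three possible edges among $\{x,y,z\}$, while $100,010,001$ correspond to edges leaving $\{x,y,z\}$). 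Naively this gives the bound $7$ (the six weight-$\le 2$ nonzero patterns plus possibly $000$), so a bare size-$2$ hypergraph is only guaranteed to be a $(3,7)$-system; I would use the degree bound to shave off one pattern.

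Concretely, the argument is: fix any three vertices $x,y,z$. If all three of $xy, yz, xz$ are edges of $G$, then $\{x,y,z\}$ is a triangle, and since $G$ has maximum degree $3$ each of $x,y,z$ has at most one further incident edge, so at most one of the patterns $100$, $010$, $001$ can appear — giving at most $3 + 1 + 1 = 5 < 6$ patterns (the three ``internal'' patterns, at most one ``leaving'' pattern, and $000$ if present). If at most two of $xy,yz,xz$ are edges, then at most $2$ of the internal patterns $110,101,011$ appear, so the total is at most $2 + 3 + 1 = 6$. In every case the number of distinct patterns on $\{x,y,z\}$ is at most $6$, so $\Ff_G$ is a $(3,6)$-system. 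Since the reduction is clearly polynomial-time and preserves the answer (hitting sets of $\Ff_G$ of size $k$ $\leftrightarrow$ vertex covers of $G$ of size $k$), NP-hardness of Hitting Set on $(3,6)$-systems follows.

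The only mild obstacle is getting the counting tight enough: a general size-$2$ hypergraph realizes all of $\{100,010,001,110,101,011\}$ plus $000$ on a suitable triple, which is $7$ patterns, so one genuinely needs the max-degree-$3$ restriction (or some equivalent sparsity) to force the count down to $6$. An alternative, if one prefers not to track the triangle case, is to reduce instead from Vertex Cover on triangle-free graphs of bounded degree (still NP-hard), where the internal patterns number at most $2$ and the bound $6$ is immediate; either way the reduction and correctness are routine, and the crux is simply the pattern-counting verification above. I would also note in passing that this $(3,6)$ example does \emph{not} contradict Theorem~\ref{thm2}: bounded-degree Vertex Cover instances are $(3,6)$-systems but not $(3,5)$-systems, which is exactly why the threshold sits between $5$ and $6$.
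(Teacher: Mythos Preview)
Your main argument has a genuine error in the triangle case. You claim that if $\{x,y,z\}$ forms a triangle and $G$ has maximum degree $3$, then ``at most one of the patterns $100$, $010$, $001$ can appear''. This is false: each of $x,y,z$ having at most one further incident edge means that \emph{each} of the three leaving patterns can appear (once), not that only one of them can. Concretely, take the triangle $x,y,z$ together with three pendant edges $xa$, $yb$, $zc$ and one extra edge $ab$; this graph has maximum degree $3$, yet on the triple $(x,y,z)$ all seven patterns $110,101,011,100,010,001,000$ are realized. So Vertex Cover on graphs of maximum degree $3$ does \emph{not} yield a $(3,6)$-system in general, and your primary reduction fails.

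Your parenthetical alternative---reducing from Vertex Cover on triangle-free graphs---is correct, and in fact is exactly the route the paper takes. Moreover, no degree bound is needed there: in any triangle-free graph, on every triple $(x,y,z)$ the pattern $111$ is absent (edges have size $2$) and at least one of $110,101,011$ is absent (no triangle), so at most $8-2=6$ patterns occur. The paper observes this and justifies NP-hardness of triangle-free Vertex Cover by the standard edge-subdivision trick. So the fix is simply to drop the max-degree-$3$ reduction and promote your alternative to the main argument; your second case analysis (``at most two of $xy,yz,xz$ are edges'') is already the whole proof once you assume triangle-freeness.
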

The proof follows by considering the Hitting Set instance that corresponds to Vertex Cover in a triangle-free graph. Indeed, the following two observations establish NP-hardness and the $(3,6)$-property:
(i) Vertex Cover in triangle-free graphs is NP-hard. This can be seen by taking an arbitrary Vertex Cover instance and splitting every edge by adding two internal vertices. The resulting graph is triangle-free. Also, the size of its optimum vertex cover is the original plus the number of edges in the original graph, and (ii) in a triangle-free Vertex Cover instance, on any three elements (vertices) the pattern $111$ and one of the patterns in $\{011, 110, 101\}$ are not realized.

\newpage 
\bibliographystyle{plain}
\bibliography{ref}

\newpage
\appendix

\section{Complexity claims in Table~\ref{tab:examples}}
\label{sec:tabledisc1}
The exact definitions and the complexity results for the various Hitting Set instances can be found by following the respective references. Edge Cover and Vertex Cover are standard problems, described e.g. in~\cite{gareyjohnson}. The study of Hitting Set (a.k.a.\ \emph{transversal} problem) on line intervals goes back to early work of Gallai~\cite{Gallai}. The folklore polynomial-time algorithm follows directly from his combinatorial observations.

The fact that Hitting Set is FPT in set systems defined by pseudolines is folklore, and can easily be explained by the property that any two points are contained in at most one set (i.e.\ line). A generalization of this property holds for arrangements of hyperplanes in $\mathbb{R}^d$: Here any $d$ points are contained in at most one hyperplane. Both properties are subsumed by the biclique-free property, or equivalently the avoidance of a submatrix consisting of all $1$s\footnote{In the literature, the fixed-parameter tractability on biclique-free instances is shown for \emph{Dominating Set}, but the result easily transfers to \emph{Hitting Set}.}.  The definition of the halfspace arrangement problem and a simple proof of hardness in three dimensions is included in \textsection\,\ref{hardnessproof}.

\section{VC-dimension claims in Table~\ref{tab:examples}}
\label{sec:tabledisc2}

The computation of the VC-dimension is an easy exercise for most of the examples in Table~\ref{tab:examples}. We mention that for some of the problems (especially those related to graphs), the value of the VC-dimension seems not to have been explicitly computed in the literature. In some cases this computation leads to approximation-results (via Br\"{o}nnimann and Goodrich~\cite{bronn}) that match the best known approximation ratio obtained via other means. We give a brief overview of the examples listed in Table~\ref{tab:examples}.

The set systems of Vertex Cover and Edge Cover instances are simple: Each set is of size $2$, respectively, each element appears in $2$ sets. In both cases it is easy to see that both the VC-dimension and the dual VC-dimension is at most $2$.

In Tree-Like Hitting Set, the sets are restricted to be subtrees of a tree. Here we can shatter an arbitrary number of elements: Consider the set of all leaves of a tree, and pick any subset of the leaves. Observe that there is a subtree that contains exactly the picked set of leaves and no other leaves. A similar argument holds for the Feedback Vertex Set problem. In a complete graph, color half of the vertices blue, and observe that if we pick any set of blue vertices (possibly the empty set), there can be a cycle containing all the blue vertices of the chosen set and no other blue vertices.

The set system associated with the \emph{Dominating Set} problem is 
the set of all closed vertex neighborhoods of a graph. Since the incidence matrix of this set system is a symmetric square matrix, the dual VC-dimension is the same as the VC-dimension.

\emph{Triangle-free graphs}: The VC-dimension can be arbitrarily large. To see this, consider an independent set $X$ of size $n$, and add $2^n$ further vertices, each connected to a different subset of $X$. Observe that $X$ is shattered, while the graph is triangle-free.

A similar argument holds for \emph{graphs free of induced $K_{t,1}$}, for $t \geq 3$. Consider a $k$-clique $X$ and a $2^k$-clique $Y$, and for each subset $X' \subseteq X$ (including the empty set), connect one vertex of $Y$ to $X'$, and to none of the vertices in $X \setminus X'$. Clearly $X$ is shattered, and thus the VC-dimension is at least $k$. If the constructed graph contains an induced $K_{t,1}$ for $t \geq 3$, then at least two non-connected vertices of the induced subgraph must be both in $X$ or both in $Y$. This is a contradiction, since $X$ and $Y$ are cliques.

\emph{Planar graphs}: A simple case-analysis shows that if a set of five vertices is shattered by the closed vertex neighborhood of a graph, then the graph must contain $K_{3,3}$ or $K_5$ as a subgraph, and thus it cannot be planar. On the other hand, it is easy to construct a planar graph instance where a set of four vertices is shattered.

\emph{Graphs of girth at least $5$}:  A simple case analysis shows that if $3$ vertices are shattered, then the graph has a triangle or a cycle of length $4$. On the other hand, $2$ vertices can be shattered in this graph class. Therefore, the VC-dimension is 2 (see e.g.~\cite{Bousquet}).  An immediate consequence of the boundedness of the VC-dimension is an $O(\log{k})$-factor approximation algorithm for Dominating Set on this class of graphs, as a corollary of the result of Br\"{o}nnimann and Goodrich~\cite{bronn}. A matching result was obtained by Raman and Saurabh~\cite{RamSau08} using sophisticated techniques. 

The claim for \emph{unit disk graphs} follows from simple geometric arguments (see e.g.~\cite{Bousquet}). For \emph{graphs avoiding $K_{t,t}$}, the incidence matrix can not contain a $t$-by-$t$ all-$1$s submatrix. It is easy to check that a matrix with $t + \lceil \log_2{t} \rceil$ columns that contains all possible $0/1$ vectors on its rows contains such a submatrix, whereas a similar matrix with one fewer columns does not. The claim on the VC-dimension follows.

For most geometric set systems in Table~\ref{tab:examples}, the VC-dimension is well known from the computational geometry and learning theory literature. 

\emph{Line intervals}: Given three points on a line, no interval can contain the two outer points without containing the one in the middle. Thus the VC-dimension is at most $2$. If $3$ intervals share a common point, then one interval is in the union of the other two. This ensures that no three intervals can be shattered, thus the dual VC-dimension is at most $2$ as well. Both values are tight.

\emph{Pseudolines}: Since any two sets intersect at most once, a $2$-by-$2$ submatrix of $1$s can not exist in the incidence matrix. This implies that no $3$ points can be shattered by the set system or by its dual. On the other hand, $2$ points can be shattered by both set systems. The claim follows. The boundedness of the VC-dimension yields an $O(\log{k})$-factor approximation algorithm for this problem. A similar result for a special case of the problem was obtained by Grantson and Levcopoulos~\cite{grantson} using different techniques.

\emph{Halfplanes}: It is easy to show that not every subset of size $2$ of a set of $4$ points in the plane can be realized by halfplanes. On the other hand, for $3$ points in general position every subset can be realized. Thus the VC-dimension is $3$. 
Observe that the dual VC-dimension is $2$, since three lines create at most $7$ cells in the plane, therefore not all patterns on $3$ sets can be realized. On the other hand, the $4$ patterns on $2$ sets can be realized.

The claims for \emph{hyperplanes in $\mathbb{R}^d$}, \emph{unit disks} and \emph{unit squares} follow from similar geometric arguments and we omit them.

\emph{Rectangle Stabbing}: In this problem the set system is defined by the incidences between a set of axis parallel rectangles (playing the role of sets), and a set of horizontal and vertical lines (playing the role of elements). Note that at most $4$ lines can be shattered, as three lines with the same orientation can not be shattered (by the same argument as for intervals), thus the VC-dimension is at most $4$ (this can be reached). However, the families of instances constructed in the hardness proof of Dom et al.~\cite{DFR09} have VC-dimension $3$. The same value is obtained for the dual VC-dimension. We omit the details. For \emph{Disjoint Rectangle Stabbing} the VC-dimension is $2$, by an argument similar to the one used for line intervals.

\section{Hardness of Hitting Set for halfspaces in $d \geq 3$}

\label{hardnessproof}

We prove now that the halfspace arrangement problem in $R^d$ is $W[1]$-hard for $d \geq 3$. The halfspace arrangement problem is a special case of Hitting Set, defined as follows:
The input has $n$ points and $n$ halfspaces in $R^d$, and a number $k$. The goal is to select $k$ halfspaces such that each of the given points is contained in at least one of the $k$ halfspaces. The special case for $d=2$ appears in Table~\ref{tab:examples} as \emph{halfplane arrangement}, and is known to be in $P$. 

\begin{theorem}
In $R^d, d \geq 3$, the halfspace arrangement problem is NP-hard and even $W[1]$-hard.
\end{theorem}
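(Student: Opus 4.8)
The plan is to exhibit a polynomial-time reduction from a known $W[1]$-hard problem, and the most economical choice is to reduce from the $d=3$ case of the very problem we are proving hard about, after first establishing hardness in $\mathbb{R}^3$ directly; equivalently, once the $d=3$ case is settled, the general $d\ge 3$ case follows by a trivial padding argument. So I would proceed in two steps. First, I would prove $W[1]$-hardness of the halfspace arrangement problem in $\mathbb{R}^3$ by reducing from a clean geometric $W[1]$-hard covering problem. A natural candidate is Dominating Set on unit disk graphs (or, more directly, the ``covering points with unit disks'' problem), which the excerpt already lists as $W[1]$-hard with VC-dimension $3$; alternatively one can reduce from the point-set version of covering points by halfplanes lifted onto a sphere. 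The key geometric idea is the classical lifting map $\pi:(a,b)\mapsto(a,b,a^2+b^2)$ sending $\mathbb{R}^2$ to the paraboloid in $\mathbb{R}^3$: a disk in the plane becomes exactly the intersection of the paraboloid with a halfspace in $\mathbb{R}^3$, and a point inside the disk maps to a point of the paraboloid inside that halfspace. Thus an instance of covering $n$ planar points by $k$ disks (chosen from a given family of $n$ disks) becomes an instance of covering the $n$ lifted points by $k$ of the $n$ corresponding halfspaces, with the solution size $k$ preserved exactly.

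Second, I would handle $d>3$ by padding: given a hard instance in $\mathbb{R}^3$, embed $\mathbb{R}^3$ as the subspace $\{x_4=\dots=x_d=0\}$ of $\mathbb{R}^d$, place all $n$ points there, and extend each halfspace $H\subseteq\mathbb{R}^3$ to the halfspace $H\times\mathbb{R}^{d-3}\subseteq\mathbb{R}^d$ defined by the same linear inequality on the first three coordinates. A point lies in the extended halfspace iff it lay in the original one, so the coverage relation — and hence the answer and the parameter $k$ — is unchanged. This shows that hardness in $\mathbb{R}^3$ propagates to every $d\ge 3$. Since NP-hardness follows from $W[1]$-hardness of a problem that is in NP (the reduction is polynomial and $k$ is part of the input), the NP-hardness claim comes for free; alternatively one gets NP-hardness directly because ordinary (unparameterized) Set Cover / disk cover is NP-hard and survives the same lifting.

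The step I expect to require the most care is verifying the geometric correctness of the lifting reduction: one must check that every disk in the chosen family maps to a genuine halfspace (not a degenerate one), that a point is covered by the disk \emph{if and only if} the lifted point is in the halfspace (the ``only if'' direction uses that all our points lie exactly on the paraboloid, so no spurious points of the halfspace off the paraboloid matter), and that the source problem is $W[1]$-hard already in the form ``choose $k$ disks from a given set of $n$ disks to cover $n$ given points'' — this is exactly the unit-disk Dominating Set / geometric Set Cover variant cited in Table~\ref{tab:examples} as $W[1]$-hard, so no new hardness proof is needed there. The padding step is routine. A minor subtlety worth stating explicitly is that the number of halfspaces and the number of points in the produced instance are both $n$ (matching the problem's definition), which holds because the lifting is a bijection on the input objects.
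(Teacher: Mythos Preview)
Your proposal is correct and follows essentially the same strategy as the paper: both reduce from Dominating Set on unit disk graphs, reformulate it as ``cover the $n$ centers by $k$ of the $n$ radius-$2$ disks centered at those points,'' and then lift disks in the plane to halfspaces in $\mathbb{R}^3$. The only real difference is the choice of lifting surface. You use the standard paraboloid map $(a,b)\mapsto(a,b,a^2+b^2)$, under which a planar disk becomes exactly the set of paraboloid points below a non-vertical plane; the paper instead first argues that the unit disk instance can be placed on a small patch of the sphere $S^2$ (so that the intersection structure is preserved) and then observes that a spherical cap of the appropriate radius is precisely $H\cap S^2$ for a halfspace $H=\{x:p_i\cdot x\ge s\}$. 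Your paraboloid route is arguably cleaner, since the disk-to-halfspace correspondence is exact and algebraic and avoids the approximation step (``embed on a tiny part of the sphere that approximates the plane sufficiently well''). The paper does not spell out the $d>3$ case; your padding argument handles it correctly.
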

\begin{proof}
We reduce from Dominating Set on the intersection graphs of unit disks, which is known to be $W[1]$-hard \cite{unitdisk}. First, observe that the problem stays
$W[1]$-hard if we consider disks with unit radius $r$ on the two-dimensional
sphere $S^2 = \{(x_1,x_2,x_3) \in R^3 : x_1^2+x_2^2+x_3^2 = 1\}$. This is because we can embed any unit disk graph on a tiny
part of the sphere $S^2$ that approximates the plane sufficiently well.
Given the embedding with disk midpoints $p_1,...,p_n$ on $S^2$ and the radius $r$, we want to find a
dominating set of the intersection graph of these disks. This
is equivalent to finding $k$ indices $i_1, \ldots ,i_k$ such that the disks of
radius $2r$ around $p_{i_1},..,p_{i_k}$ cover all points $p_1, \ldots ,p_n$.
We construct an equivalent instance of the halfspace arrangement
problem as follows: The $n$ points are  $p_1, \ldots ,p_n$. Let $0<s<1$. For
$1 \leq i \leq n$ we add a halfspace $H_i = \{x \in R^3 : p_i . x \geq s \}$ with normal vector $p_i$.
Crucially, observe that 
by setting $s$ appropriately, $H_i \cap S^2$ is equal to the disk of
radius $2r$ around $p_i$. 
Since all points $p_i$ lie on $S^2$, it is equivalent whether we
consider $H_i$ or $H_i \cap S^2$ as sets (of the arrangement problem). 
\end{proof}

\end{document}